\DeclareMathOperator\arctanh{arctanh}
\DeclareMathOperator\sgn{sgn}
\newtheorem{theorem}{Theorem}[section]
\newtheorem{lemma}{Lemma}[section]
\newtheorem{proposition}{Proposition}[section]
\newtheorem{assumption}{Assumption}[section]
\newtheorem{definition}{Definition}[section]
\newtheorem{remark}{Remark}[section]
\numberwithin{figure}{section}
\begin{document}

\title[Short-maturity asymptotics for option prices]
{Short-maturity asymptotics for option prices with interest rates effects}

\author{Dan Pirjol}
\address{Stevens Institute of Technology, Hoboken, New Jersey, United States of America}

\author{Lingjiong Zhu}
\address{Florida State University, Tallahassee, Florida, United States of America}

\date{February 21, 2024}

\keywords{Option asymptotics, large deviations, local volatility models}

\begin{abstract}
We derive the short-maturity asymptotics for option prices in the local volatility model
in a new short-maturity limit $T\to 0$ at fixed $\rho = (r-q) T$, where $r$ is the interest rate and $q$ is the dividend yield. 
In cases of practical relevance $\rho$ is small, however our result holds for any fixed $\rho$.
The result is a generalization of the Berestycki-Busca-Florent formula \cite{BBF} for the short-maturity asymptotics of the implied volatility which includes interest rates and dividend yield effects
of $O(((r-q) T)^n)$ to all orders in $n$. We obtain analytical results for the ATM volatility and skew in this asymptotic limit. 
Explicit results are derived for the CEV model. 
The asymptotic result is tested numerically against exact evaluation in the
square-root model model $\sigma(S)=\sigma/\sqrt{S}$, which demonstrates that the
new asymptotic result is in very good agreement with exact evaluation in a wide range of model parameters relevant for practical applications.
\end{abstract}

\maketitle

\baselineskip18pt

\section{Introduction}

The simplest model for the risk-neutral dynamics of an asset price which is consistent with the observed market prices of the vanilla options with all strikes and maturities is the local volatility model \cite{Dupire1994,Derman1996}. 
This model is widely used in financial practice for 
pricing equities, FX and commodities derivatives.

Under the local volatility model the asset price $S_t$ is assumed to follow the process
under the risk-neutral probability measure $\mathbb{Q}$
\begin{equation}\label{SDE}
\frac{dS_t}{S_t} = \sigma(S_t) dW_t + (r - q) dt \,,\quad S_0 >0\,,
\end{equation}
where $W_t$ is a standard Brownian motion, $r$ is the risk-free rate, 
$q$ is the dividend yield and $\sigma(\cdot )$ is the local volatility function.
We assume that the local volatility function is a function of the asset price only and 
hence the local volatility model is time-homogeneous. 
The time homogeneity assumption can be relaxed in principle, to allow for a general time-dependent $\sigma(t,\cdot)$, 
although it has been shown that under mild conditions, even with this more general volatility function, 
short-maturity asymptotics only depends on $\sigma(0,\cdot)$; see e.g. \cite{BBF}.
It is therefore reasonable to assume that our results hold also for time-dependent volatility under mild conditions
with $\sigma(\cdot)$ being replaced by $\sigma(0,\cdot)$ to extend our results.

The short maturity limit of the implied volatility in the model (\ref{SDE}) was obtained
by Berestycki, Busca, Florent \cite{BBF}:
\begin{equation}
\lim_{T\to 0} \sigma_{BS}(K,T) = \frac{\log(K/S_0)}{\int_{S_0}^K \frac{dx}{x\sigma(x)}}\,,
\end{equation}
where $K>0$ is the strike price.
We refer to Lee \cite{Lee2004} for an overview of this formula and its properties. 
This result can also be obtained using large deviations theory \cite{Pham2007}.

The higher orders $O(T,T^2)$ in the short maturity expansion of the implied volatility in the local volatility model have been obtained using an expansion of the Dupire formula in \cite{HLbook} and using a heat kernel expansion by Gatheral et al. \cite{Gatheral}. 
A Taylor expansion for the implied volatility has been proved in \cite{Pag2017}. The asymptotics of the pricing PDE was 
studied in \cite{Cheng2011}.
Similar small maturity expansions have been obtained in stochastic volatility models 
\cite{SABR,Forde2009,Forde2012,Akahori2022}, local-stochastic volatility models \cite{Forde2011,Lorig2017} and models with jumps \cite{Alos2007,Figueroa2012}.

A generic feature of the leading order asymptotics as the maturity $T\to 0$ is the independence of the result on interest rates and dividend yields. Interest rates effects appear first at $O(T)$. The absence of these contributions introduces errors and reduces the practical usefulness of the leading order result. 
In this paper we present a modified short-maturity limit in the local volatility model, which includes contributions from interest rates effects 
already at the leading order in the expansion.

In this paper we consider asymptotics of option prices assuming that the asset price follows the model (\ref{SDE}) in the short-maturity limit 
\begin{equation}\label{rholimit}
T\to 0 \,,\quad \mbox{ at fixed } \rho = (r - q) T \,,
\end{equation}
where $rT^{2}=o(1)$ as $T\rightarrow 0$.
In most practical applications the $\rho$ parameter is small in absolute value. For example, assuming $r=6\%$, $q=0$ and an option maturity of 6 months, we have $T=0.5$ and $\rho=0.03$. However, theoretically, our results 
do not rely on the smallness of the $\rho$ parameter and hold for any fixed $\rho$.

We illustrate the effect of this limit by taking in (\ref{SDE}) the substitution $t\to \tau T$ with 
$\tau \in [0,1]$ and $(r-q) \to \frac{\rho}{T}$, such that $(r-q) T$ is fixed.
With this substitution the diffusion (\ref{SDE}) becomes
\begin{equation}
\frac{dS_{\tau T}}{S_{\tau T}} = 
\sigma(S_{\tau T}) dW_{\tau T} + T (r - q) d\tau 
= \sqrt{T} \sigma(S_{\tau T}) dW_{\tau} + \rho d\tau \,.
\end{equation}
As $T\to 0$ the volatility term is a small perturbation, but the drift term is constant and has to be fully taken into account.
This can be contrasted with the usual small-maturity limit where $r-q$ is kept fixed. In this case the drift term is $O(T)$ and does not contribute at leading order as $T\to 0$.

The limit (\ref{rholimit}) was previously considered in \cite{PZDiscreteAsian} for obtaining short-maturity
asymptotics for Asian options in the Black-Scholes model with non-zero $r$ and $q$.
Numerical results showed that including the $\rho$ dependence improved considerably
the agreement with exact benchmark evaluations, compared with the simple $T\to 0$
asymptotic result. A similar limit was used recently in \cite{PZLaplace} 
to obtain asymptotics for the Laplace transform of the time integral of the geometric Brownian motion.

The paper is organized as follows.
In Section~\ref{sec:main}, we derive the short-maturity asymptotics for out-of-the-money European options in the local volatility model in the limit (\ref{rholimit})
using large deviations theory. 
The result is expressed in terms of a rate function which is expressed as an integral of a functional along an optimal path, joining the spot price with the option strike, and determines the asymptotic implied volatility. The optimal path and the rate function are determined by the solution of a variational problem.
In Section~\ref{sec:variational}, we further analyze and solve this variational problem.
We give closed form results for the asymptotic implied volatility, its at-the-money (ATM) level and skew. 
In contrast to the usual short-maturity asymptotic result \cite{BBF}, under our limit the ATM level and skew of the implied volatility depend on an average of the local volatility in a region of log-strikes around the ATM point of width $\sim (r-q)T$.
Furthermore we show how our result reproduces the known result in the literature for the $O(rT)$ contribution to the implied volatility \cite{HLbook,Gatheral}, 
when expanded in $\rho$ to $O(\rho)$.
In Section~\ref{sec:CEV}, we apply our asymptotic results to the
CEV model and provide numerical experiments that show good performance of our method.
Apart from the theoretical interest,
including interest rates effects in option pricing
should be also of practical relevance, especially in the current economic environment of increasing interest rates. 
A few Appendices give background for large deviations theory and technical details and proofs of the results in Section \ref{sec:CEV}.

\section{Main Result}\label{sec:main}

We assume that the local volatility function $\sigma(x)$ in (\ref{SDE})
satisfies the following assumption.

\begin{assumption}\label{assump:main}
$\sigma(x)$ is bounded, i.e. $0<M_L \leq \sigma(\cdot) \leq 
M_U<\infty$, is differentiable, and satisfies a H\"{o}lder condition 
$|\sigma(e^x) - \sigma(e^y)| \leq M |x-y|^\eta$, with some $M,\eta>0$,
for any $x,y$.
\end{assumption}

The boundedness and H\"{o}lder conditions are not satisfied by some models that are popular in financial practice such as the CEV model $\sigma(S)=\sigma S^{\alpha-1}$. In Section~\ref{sec:CEV}, we will discuss how to relax Assumption~\ref{assump:main} to extend our analysis to include the CEV model.


European call and put option prices are given by risk-neutral expectations
\begin{equation}
C(K,T) = e^{-rT} \mathbb{E}[(S_T - K)^+] \,,\quad
P(K,T) = e^{-rT} \mathbb{E}[(K - S_T )^+] \,,
\end{equation}
where $K>0$ is the strike price and $S_{T}$ is the asset price at maturity $T>0$
with $x^{+}$ denoting $\max(x,0)$ for any $x\in\mathbb{R}$.
The forward price for maturity $T$ is $F(T) = S_0 e^{(r-q)T}$. 
Call options with $K> F(T)$ are out-of-the-money (OTM), with 
$K < F(T)$ are in-the-money (ITM) and with $K=F(T)$ are at-the-money (ATM). 

\begin{theorem}\label{thm:LD}
Assume that the asset price $S(t)$ follows the local volatility model
(\ref{SDE}) and Assumption~\ref{assump:main} is satisfied.
The asymptotics of call and put options in the short-maturity limit $T\to 0$
at fixed $\rho=( r - q) T$ with $rT^{2}=o(1)$ satisfy
\begin{eqnarray}\label{limC}
&& \lim_{T\to 0} T \log C(K,T) = - I(K,S_0) \,,\qquad K > S_0\,, \\
\label{limP}
&& \lim_{T\to 0} T \log P(K,T) = - I(K,S_0) \,,\qquad K < S_0 \,,
\end{eqnarray}
where the rate function $I(K,S_0)$ is given by
\begin{equation}\label{Isol}
I(K,S_0) = \inf_{g \in \mathcal{G}} \frac12 \int_0^1 \left( \frac{g'(t) - \rho}{\sigma(S_0 e^{g(t)})}
\right)^2 dt\,,
\end{equation}
where
\begin{equation}
\mathcal{G} := \left\{ g | g(0) = 0, g(1) = \log\left(K/S_{0}\right)\,, g \in \mathcal{AC}_{[0,1]} \right\}
\end{equation}
with $\mathcal{AC}_{[0,1]}$ being the set of all absolutely continuous functions on $[0,1]$,
and $\infty$ otherwise.
\end{theorem}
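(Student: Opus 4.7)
The plan is to reduce the theorem to a Freidlin--Wentzell sample-path large deviation principle (LDP) for the rescaled log-price process on $[0,1]$, and then translate the terminal-value LDP into the exponential asymptotics of the option payoff via Varadhan-type upper and lower bounds. Concretely, I would rescale time by $\tau := t/T$ and set $X_\tau^T := \log(S_{\tau T}/S_0)$ for $\tau \in [0,1]$. Using It\^o's formula and the time-change $B_\tau := W_{\tau T}/\sqrt{T}$, which is a standard Brownian motion, one obtains
\[
dX_\tau^T = \sqrt{T}\,\sigma(S_0 e^{X_\tau^T})\, dB_\tau + \Bigl(\rho - \tfrac{T}{2}\sigma^2(S_0 e^{X_\tau^T})\Bigr)\, d\tau, \qquad X_0^T = 0.
\]
As $T\to 0$ with $\rho$ fixed, this is a small-noise diffusion with fixed drift $\rho$ and a vanishing $O(T)$ It\^o correction --- exactly the Freidlin--Wentzell setting.

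\textbf{Path LDP and contraction.} Under Assumption~\ref{assump:main} I would invoke the Freidlin--Wentzell LDP: the family $\{X^T\}_{T>0}$ satisfies an LDP on $C([0,1])$ with speed $T^{-1}$ and good rate function
\[
I_0(g) = \frac{1}{2}\int_0^1 \left(\frac{g'(t)-\rho}{\sigma(S_0 e^{g(t)})}\right)^2 dt
\]
for $g\in \mathcal{AC}_{[0,1]}$ with $g(0)=0$, and $+\infty$ otherwise. The two-sided bound $M_L\leq \sigma\leq M_U$ and the H\"older condition are used to establish strong well-posedness of the SDE, exponential tightness of $\{X^T\}$, and to absorb the vanishing $O(T)$ drift perturbation via a standard Girsanov/comparison argument. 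Applying the contraction principle to the evaluation map $g\mapsto g(1)$ yields an LDP for $X_1^T = \log(S_T/S_0)$ with rate function $J(x) = \inf\{I_0(g) : g(0)=0,\, g(1)=x\}$, and in particular $J(\log(K/S_0)) = I(K,S_0)$.

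\textbf{Varadhan-type bounds.} For the out-of-the-money call, set $k:=\log(K/S_0)>0$ and write $C(K,T) = e^{-rT} S_0\, \mathbb{E}[(e^{X_1^T}-e^k)^+]$; the discount prefactor contributes $-rT^2 \to 0$ to $T\log C$ by the hypothesis $rT^2 = o(1)$. For the upper bound I would apply H\"older's inequality,
\[
\mathbb{E}[(e^{X_1^T}-e^k)^+] \leq \bigl(\mathbb{E}[e^{pX_1^T}]\bigr)^{1/p}\, \mathbb{P}(X_1^T\geq k)^{1-1/p},
\]
control $\mathbb{E}[e^{pX_1^T}]$ uniformly in small $T$ using $\sigma \leq M_U$ (the exponential martingale $\exp(p\int_0^T\sigma\, dW - \tfrac{p^2}{2}\int_0^T\sigma^2\, dt)$ together with the bound gives sub-Gaussian moment estimates), invoke the LDP upper bound on $\mathbb{P}(X_1^T\geq k)$, and let $p\to\infty$. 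For the lower bound I would use $(e^{X_1^T}-e^k)^+ \geq (e^{k+\delta}-e^k)\mathbf{1}_{\{X_1^T\geq k+\delta\}}$, apply the LDP lower bound on the open set $(k+\delta,\infty)$, and then send $\delta\downarrow 0$ using continuity of $J$ at $k$. The put case is treated symmetrically using $(e^k - e^{X_1^T})^+$.

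\textbf{Main obstacle.} The two delicate points are verifying the Freidlin--Wentzell LDP under only H\"older (rather than Lipschitz) regularity of $\sigma \circ \exp$, and reconciling the endpoint LDP with the one-sided event: one must show $\inf_{x\geq k} J(x) = J(k)$, which I would prove by a monotonicity argument on the variational problem --- on the OTM side of the drift-driven point, any admissible path with $g(1)>k$ can be truncated to one with endpoint $k$ at strictly smaller cost, since the running cost $((g'-\rho)/\sigma)^2$ penalizes excursions beyond the target. The H\"older regularity issue can be handled by approximating $\sigma$ with Lipschitz functions and passing to the limit using the uniform exponential moment bounds from the two-sided bound on $\sigma$, or by appealing directly to generalizations of the Freidlin--Wentzell theorem that accommodate H\"older diffusion coefficients.
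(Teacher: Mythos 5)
Your proposal takes a genuinely different route from the paper, although it arrives at the same variational formula.

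The paper first removes the large drift $(r-q)$ by a Girsanov change to an auxiliary measure $\hat{\mathbb{Q}}$, under which the log-price has only the $-\tfrac12\sigma^2$ It\^o drift. It then invokes Varadhan's classical small-time LDP (valid for uniformly elliptic, bounded, H\"older generators --- precisely Assumption~\ref{assump:main}) for the $\hat{\mathbb{Q}}$-dynamics, whose rate function has \emph{no} $\rho$-term, and recovers the drift via Varadhan's lemma applied to the Radon--Nikodym density. After rewriting the stochastic integral $\int_0^T (r-q)\,dX_t/\sigma^2(e^{X_t})$ so that it becomes a deterministic functional of the path (essentially $\rho[\psi(g(1))-\psi(g(0))]$, a continuous function of the endpoints), the whole $\rho$-dependence of the rate function comes out of the Laplace variational problem. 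You instead keep the drift $\rho$ inside the dynamics, rescale time, and claim a Freidlin--Wentzell sample-path LDP for the small-noise diffusion with fixed drift $\rho$ and state-dependent diffusion coefficient; then you contract to the terminal value and use H\"older/thresholding bounds to turn the endpoint LDP into option-price asymptotics.

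The trade-offs are: the paper's Girsanov route only needs an LDP for a drift-free (or vanishing-drift) diffusion, for which Varadhan's 1967 result applies directly under H\"older regularity of $\sigma$, so no Freidlin--Wentzell machinery with non-Lipschitz coefficients is required. The price it pays is the somewhat delicate application of Varadhan's lemma to a functional that involves a stochastic integral; the paper handles this by converting the stochastic integral into an endpoint functional via the chain rule, which makes the functional continuous on $C[0,1]$. Your route is conceptually more direct and avoids both the measure change and the Varadhan-lemma bookkeeping, but it hinges on a Freidlin--Wentzell LDP with a merely H\"older diffusion coefficient. You correctly flag this as the main obstacle and sketch two remedies (Lipschitz approximation plus uniform moment control, or citing a H\"older-robust extension of FW such as Azencott's); neither is carried out, so this remains the one genuine gap that would need to be filled before your argument is complete. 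Apart from that, the reduction from $\mathbb{E}[(e^{X_1^T}-e^k)^+]$ to $\mathbb{P}(X_1^T\geq k)$ via H\"older and thresholding, and your observation that the one-sided infimum $\inf_{x\geq k}J(x)$ must be shown to equal $J(k)$ by a monotonicity/truncation argument on the OTM side of $\rho$, are both consistent with the paper (which appeals to \cite{Pham2007} for the same reduction and implicitly uses the same monotonicity when passing from $g(1)\geq\log(K/S_0)$ to $g(1)=\log(K/S_0)$). One small omission: you should also record that the discount prefactor $e^{-rT}$ contributes $-rT^2=o(1)$ to $T\log C(K,T)$, which is exactly the role of the hypothesis $rT^2=o(1)$; you mention this in passing, and it is the first line of the paper's proof.
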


\begin{proof}
Since $rT^{2}=o(1)$ as $T\rightarrow 0$, 
we have
\begin{equation*}
\lim_{T\to 0} T \log C(K,T) =
\lim_{T\to 0} T \log e^{-rT}\mathbb{E}[(S_{T}-K)^{+}]
=\lim_{T\to 0} T \log \mathbb{E}[(S_{T}-K)^{+}].
\end{equation*}
Using standard arguments, see e.g. \cite{Pham2007}, one can show that the small-time asymptotics of the call option price with $K>S_{0}$ is related to the small-time asymptotics of the density of the asset price in the right wing
\begin{equation}\label{limT}
\lim_{T\to 0} T \log \mathbb{E}[(S_{T}-K)^{+}] = \lim_{T\to 0}
T \log \mathbb{Q}(S_T \geq K)\,,\quad K > S_0.
\end{equation}
A similar relation holds between the small-time asymptotics of the put options 
and of the density of $S_T$ in the left wing ($K < S_0$).
For both cases the limit (\ref{limT}) can be computed using large deviations theory as follows. 

Denoting $X_{t}:=\log S_{t}$, we have the stochastic differential equation
\begin{equation}\label{original:SDE}
dX_{t}=\sigma(e^{X_{t}})dW_{t}+\left(r - q -\frac{1}{2}\sigma^{2}\left(e^{X_{t}}\right)\right)dt.
\end{equation}
We are interested in the asymptotics in the limit $T\rightarrow 0$
with $(r-q)T=\rho$ being a fixed constant. 
Let us define a new probability measure $\hat{\mathbb{Q}}$ via
the Radon-Nikodym derivative:
\begin{equation}
\frac{d\mathbb{Q}}{d\hat{\mathbb{Q}}}\bigg|_{\mathcal{F}_{T}}
=e^{\int_{0}^{T}\frac{r - q}{\sigma(e^{X_{t}})}d\hat{W}_{t}
-\frac{1}{2}\int_{0}^{T}\frac{(r - q)^{2}}{\sigma^{2}(e^{X_{t}})}dt},
\end{equation}
where by Girsanov theorem, 
\begin{equation}
\hat{W}_{t}:=W_{t}+\int_{0}^{t}\frac{r - q}{\sigma(e^{X_{s}})}ds
\end{equation}
is a standard Brownian motion under the $\hat{\mathbb{Q}}$ measure, 
and we can rewrite \eqref{original:SDE} as
\begin{equation}\label{hat:SDE}
dX_{t}=\sigma(e^{X_{t}})d\hat{W}_{t}-\frac{1}{2}\sigma^{2}(e^{X_{t}})dt.
\end{equation}

A sufficient condition that the Radon-Nikodym derivative is a martingale is given by
the Novikov condition, which requires that
$\mathbb{E}\left[e^{\frac{1}{2}\int_{0}^{T}\frac{(r-q)^{2}dt}{\sigma^2(e^{X_{t}})}  } \right] < \infty$, where $T \geq 0$.
This condition is satisfied if the local volatility function $\sigma(\cdot)$ is strictly positive and is bounded
from below as $\sigma(\cdot ) \geq M_L >0$ under Assumption~\ref{assump:main}. 

It is known \cite{Varadhan} that under Assumption~\ref{assump:main} 
$\hat{\mathbb{Q}}(X_{\cdot T}\in\cdot)$
satisfies a sample path large deviation principle with rate function
\begin{equation}
I(g) = \frac{1}{2}\int_{0}^{1}\left(\frac{g'(t)}{\sigma(e^{g(t)})}\right)^{2}dt
\end{equation}
with $g(0)=\log S_0$ and $g\in \mathcal{AC}[0,1]$ being the set of all absolutely continuous functions on $[0,1]$, and $I(g)=\infty$ otherwise. See Appendix~\ref{app:LD} for a formal definition of the large deviation principle.

For call options with $K>S_{0}$, we have
\begin{align*}
\mathbb{Q}(S_{T}\geq K)
=\mathbb{E}[1_{X_{T}\geq \log K}]
=\hat{\mathbb{E}}\left[e^{\int_{0}^{T}\frac{r - q}{\sigma(e^{X_{t}})}d\hat{W}_{t}
-\frac{1}{2}\int_{0}^{T}\frac{(r - q)^{2}}{\sigma^{2}(e^{X_{t}})}dt}
\cdot 1_{X_{T}\geq \log K}\right].
\end{align*}
On the other hand, by dividing both hand sides of \eqref{hat:SDE}
by $\sigma^{2}(e^{X_{t}})$, we obtain
\begin{equation}\label{hat2:SDE}
\frac{dX_{t}}{\sigma^{2}(e^{X_{t}})}=\frac{1}{\sigma(e^{X_{t}})}d\hat{W}_{t}-\frac{1}{2}dt,
\end{equation}
so that
\begin{align*}
\mathbb{Q}(S_{T}\geq K)
&=\hat{\mathbb{E}}\left[e^{\int_{0}^{T}\frac{r - q}{\sigma^{2}(e^{X_{t}})}dX_{t}
+\frac{1}{2}(r - q)T
-\frac{1}{2}\int_{0}^{T}\frac{(r - q)^{2}}{\sigma^{2}(e^{X_{t}})}dt}
\cdot 1_{X_{T}\geq \log K}\right]
\\
&=e^{\frac12 \rho}\cdot\hat{\mathbb{E}}
\left[e^{\int_{0}^{T}\frac{r-q}{\sigma^{2}(e^{X_{t}})}dX_{t}
-\frac{1}{2}\int_{0}^{T}\frac{(r - q)^{2}}{\sigma^{2}(e^{X_{t}})}dt}
\cdot 1_{X_{T}\geq \log K}\right].
\end{align*}
By applying Varadhan's lemma (see Appendix~\ref{app:LD} for the precise statement), we obtain
\begin{align*}
&\lim_{T\rightarrow 0}T\log\hat{\mathbb{E}}\left[e^{\int_{0}^{T}\frac{r-q}{\sigma^{2}(e^{X_{t}})}dX_{t}
-\frac{1}{2}\int_{0}^{T}\frac{(r-q)^{2}}{\sigma^{2}(e^{X_{t}})}dt}
\cdot 1_{X_{T}\geq \log K}\right]
\\
&=\lim_{T\rightarrow 0}T\log\hat{\mathbb{E}}\left[e^{\frac{1}{T}\int_{0}^{1}\frac{\rho}{\sigma^{2}(e^{X_{tT}})}dX_{tT}
-\frac{1}{2}\frac{1}{T}\int_{0}^{1}\frac{\rho^{2}}{\sigma^{2}(e^{X_{tT}})}dt}
\cdot 1_{X_{T}\geq \log K}\right]
\\
&=\sup_{g\in\mathcal{AC}[0,1]:g(0)=\log S_{0}, g(1)\geq\log K}\left\{\int_{0}^{1}\frac{\rho g'(t)}{\sigma^{2}(e^{g(t)})}dt
-\frac{1}{2}\int_{0}^{1}\frac{\rho^{2}}{\sigma^{2}(e^{g(t)})}dt
-\frac{1}{2}\int_{0}^{1}\left(\frac{g'(t)}{\sigma(e^{g(t)})}\right)^{2}dt\right\}
\\
&=-\inf_{g\in\mathcal{AC}[0,1]:g(0)=\log S_{0},g(1)\geq\log K}
\frac{1}{2}\int_{0}^{1}\left(\frac{g'(t)-\rho}{\sigma(e^{g(t)})}\right)^{2}dt.
\end{align*}

It is convenient to subtract the value $\log S_0$ by redefining 
$g(t) \to g(t) - \log S_0$. This yields the stated result (\ref{Isol}) for the rate function.
The case of the puts with $K<S_{0}$ is obtained in a similar way.
The proof is complete.
\end{proof}

\begin{remark}
In the special case of the Black-Scholes model where $\sigma(\cdot ) = \sigma$ is a constant,
the rate function is 
$I(K,S_0) = \frac{1}{2\sigma^2} \Big( \log\frac{K}{S_0} - \rho \Big)^2 = 
\frac{1}{2\sigma^2} x^2$
where $x = \log\frac{K}{F(T)}$ denotes the log-moneyness of the option 
and $F(T) =S_0 e^\rho$ is the forward price. It is easy to check that
this agrees with the Black-Scholes formula under our asymptotic regime.
\end{remark}

\begin{remark}
The asymptotics \eqref{limC}-\eqref{limP} in Theorem~\ref{thm:LD} are for call options with $K>S_{0}$ and put options
with $K<S_{0}$.
One can simply apply put-call parity to obtain the corresponding asymptotics for put options with $K>S_{0}$ and call options
with $K<S_{0}$.
\end{remark}

For $\rho=0$, the variational problem \eqref{Isol} of Theorem~\ref{thm:LD} simplifies 
when expressed in terms of the function
\begin{equation}
h(t) = y( g(t) - \log S_{0})\,,
\end{equation}
with $y(x) := \int_0^x \frac{dz}{\sigma(S_{0} e^z)}$. Using $h'(t) = 
\frac{g'(t)}{\sigma(S_0 e^{g(t)})}$, the variational problem \eqref{Isol} becomes
\begin{equation}
I(K,S_0) = \inf_{h} \frac12 \int_0^1 [h'(t)]^2 dt\,,
\end{equation}
where $h(0)=0,h(1) = \log\frac{K}{S_0}$.
The solution for $h(t)$ satisfies the Euler-Lagrange equation $h''(t)=0$. 
The solution is a linear function of the form $h(t)=y(x) t$, and the rate function can be found in closed form
\begin{equation}\label{Irho0}
    I(K,S_{0}) = \frac12 y^2(x)\,, \quad (\rho=0) \,.
\end{equation}

The short-maturity asymptotics of Theorem~\ref{thm:LD} can be formulated as a
short-maturity limit for the implied volatility 
\begin{equation}
\lim_{T\to 0} \sigma_{BS}^2(K,T) = \frac{(\log \frac{K}{S_0} - \rho)^2}{2I(K,S_0)}
:= \sigma_{\mathrm{BBF},\rho}(K,S_0;\rho) \,.
\end{equation}

The $\rho=0$ limiting result (\ref{Irho0}) gives the short-maturity implied volatility
\begin{equation}\label{sigBBF}
\lim_{T\to 0} \sigma_{BS}(K,T) = \frac{\log \frac{K}{S_0}}{\int_{S_0}^K \frac{dx}{x\sigma(x)}} := \sigma_{\mathrm{BBF}}(K,S_0)\,,
\end{equation}
which recovers the well-known BBF formula \cite{BBF} for the leading
short-maturity asymptotics of 
the implied volatility in the local volatility model.

We study next the solution of the variational problem for $\rho\neq 0$,
and give an explicit result for the rate function $I(K,S_0)$.

\section{Solution of the Variational Problem}\label{sec:variational}

We give in this section the solution of the variational problem \eqref{Isol} in Theorem~\ref{thm:LD} with non-zero $\rho$. 
We start by studying the properties of the optimizer
$g(t)$ in this variational problem and classify the solutions of the Euler-Lagrange 
equation into three distinct classes. 
Then we give an explicit result for the rate function $I(K,S_0)$ in terms of quadratures.

\begin{proposition}
The optimizer $g(t)$ in the variational problem of Theorem
\ref{thm:LD} satisfies the Euler-Lagrange equation
\begin{equation}\label{ELg}
g''(t) = S_0 e^{g(t)} \frac{\sigma'(S_0 e^{g(t)}) }{\sigma ( S_0 e^{g(t)} ) } 
\left[ (g'(t))^2 - \rho^2\right]\,,
\end{equation}
with boundary conditions $g(0)=0$ and $g(1) = \log\frac{K}{S_0}$.
\end{proposition}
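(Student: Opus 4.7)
The proof is a direct application of the Euler--Lagrange equation to the Lagrangian
\[
L(g,g') = \frac{1}{2}\left(\frac{g'-\rho}{\sigma(S_0 e^{g})}\right)^{2},
\]
which depends on $g$ and $g'$ but not explicitly on $t$. The boundary conditions $g(0)=0$ and $g(1) = \log(K/S_0)$ are inherited from the admissible class $\mathcal{G}$, so no transversality condition is needed; the constraints are fixed at both endpoints and the variations vanish there.

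The plan is first to compute the two partial derivatives, namely
\[
\frac{\partial L}{\partial g'} = \frac{g'-\rho}{\sigma^{2}(S_0 e^{g})}, \qquad
\frac{\partial L}{\partial g} = -\frac{(g'-\rho)^{2}}{\sigma^{3}(S_0 e^{g})}\,\sigma'(S_0 e^{g})\, S_0 e^{g},
\]
where in the latter I differentiate $\sigma(S_0 e^{g})$ with respect to $g$ by the chain rule. Then I would assemble $\frac{d}{dt}\partial_{g'}L = \partial_{g}L$, which reads
\[
\frac{g''(t)}{\sigma^{2}(S_0 e^{g})} - \frac{2(g'-\rho)\,g'\,\sigma'(S_0 e^{g})\,S_0 e^{g}}{\sigma^{3}(S_0 e^{g})} = -\frac{(g'-\rho)^{2}\,\sigma'(S_0 e^{g})\,S_0 e^{g}}{\sigma^{3}(S_0 e^{g})}.
\]
Multiplying through by $\sigma^{3}(S_0 e^{g})/[\sigma'(S_0 e^{g}) S_0 e^{g}]$ (with the usual caveat at points where $\sigma'$ vanishes, where the claimed identity reduces to $g''=0$ consistently) and using
\[
2(g'-\rho)g' - (g'-\rho)^{2} = (g'-\rho)(g'+\rho) = (g')^{2}-\rho^{2},
\]
the stated Euler--Lagrange equation \eqref{ELg} drops out.

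The main technical point to address beforehand is a regularity argument, since a priori the optimizer is only in $\mathcal{AC}[0,1]$, so $g''$ need not even exist pointwise. Under Assumption~\ref{assump:main} the integrand is convex and coercive in $g'$ (it is a quadratic in $g'$ with coefficient $1/[2\sigma^{2}(S_0 e^{g})] \in [1/(2M_U^{2}), 1/(2M_L^{2})]$), and $\sigma$ is bounded and differentiable. Standard results of Tonelli regularity for one-dimensional variational problems then yield existence of a minimizer $g \in \mathcal{AC}[0,1]$ and, because $L$ is smooth in $(g,g')$ with strict convexity in $g'$, an elliptic bootstrap in the DuBois--Reymond/Euler--Lagrange equation upgrades $g$ to a $C^{2}$ function on $(0,1)$ for which \eqref{ELg} holds classically.

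Thus the only real work is the algebraic manipulation sketched above, and the hard part is the regularity upgrade from absolutely continuous to $C^{2}$; since this is standard under the boundedness and differentiability hypotheses of Assumption~\ref{assump:main}, I would simply cite it and focus the written proof on the explicit Euler--Lagrange computation.
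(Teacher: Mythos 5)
Your proof is correct and takes essentially the same route as the paper: you define the Lagrangian $L(g,g') = \tfrac12\bigl((g'-\rho)/\sigma(S_0 e^{g})\bigr)^2$, compute $\partial_{g'}L$ and $\partial_{g}L$, assemble $\tfrac{d}{dt}\partial_{g'}L = \partial_{g}L$, and simplify to \eqref{ELg}. The added remarks on Tonelli regularity and the $\sigma'=0$ case are sensible extra care that the paper omits, but the core computation is identical; one small stylistic simplification is to clear denominators by multiplying through by $\sigma^{2}(S_0 e^{g})$ rather than by $\sigma^{3}/(\sigma' S_0 e^{g})$, which avoids the $\sigma'=0$ caveat entirely.
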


\begin{proof}
Define
\begin{equation}
L(g(t),g'(t)):=\frac{1}{2}\left(\frac{g'(t)-\rho}{\sigma(S_0 e^{g(t)})}\right)^{2}\,.
\end{equation}
The Euler-Lagrange equation for the variational problem \eqref{Isol} reads
\begin{equation}
\frac{\delta L}{\delta g}=\frac{d}{dt}\frac{\delta L}{\delta g'} \,.
\end{equation}

This gives
\begin{align}
\left(g'(t)-\rho\right)^{2}\frac{-S_0 \sigma'(S_0 e^{g(t)})e^{g(t)}}{\sigma^{3}(S_0 e^{g(t)})}
&=\frac{d}{dt}\left(\frac{g'(t)-\rho}{\sigma^{2}(S_0 e^{g(t)})}\right)
\nonumber
\\
&=\frac{g''(t)}{\sigma^{2}(S_0 e^{g(t)})}
+2\left(g'(t)-\rho\right)g'(t)\frac{-S_0 \sigma'(S_0 e^{g(t)})e^{g(t)}}{\sigma^{3}(S_0 e^{g(t)})}\,,
\end{align}
which is equivalent with the equation (\ref{ELg}).
This completes the proof.
\end{proof}

The solutions of the Euler-Lagrange equation (\ref{ELg}) have a constant of motion.

\begin{proposition}
Assume that $g(t)$ satisfies the Euler-Lagrange equation (\ref{ELg}).
Then 
\begin{equation}\label{Cdef}
C := \frac{1}{\sigma^2(S_0 e^{g(t)} )} \left[ (g'(t))^2 - \rho^2\right]
\end{equation}
is a constant.
\end{proposition}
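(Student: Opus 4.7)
The plan is to verify that $C(t)$ is constant in $t$ by direct differentiation and substitution of the Euler-Lagrange equation (\ref{ELg}). This is the standard first integral (Beltrami identity) for a Lagrangian with no explicit time dependence. Since
\begin{equation*}
L(g,g') = \frac{1}{2}\left(\frac{g'-\rho}{\sigma(S_0 e^{g})}\right)^{2}
\end{equation*}
depends on $t$ only through $g(t)$ and $g'(t)$, the quantity $H := g'\,\partial L/\partial g' - L$ is automatically a conserved quantity. A short calculation gives
\begin{equation*}
H = \frac{g'(g'-\rho)}{\sigma^{2}(S_0 e^{g})} - \frac{(g'-\rho)^{2}}{2\sigma^{2}(S_0 e^{g})}
= \frac{(g'-\rho)(g'+\rho)}{2\sigma^{2}(S_0 e^{g})}
= \tfrac{1}{2}\,C,
\end{equation*}
so $C$ is (twice) the conserved Hamiltonian; this is the conceptual reason for the statement.

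For a self-contained proof following the style of the paper, I would instead just differentiate the expression (\ref{Cdef}) in $t$. Applying the quotient rule and the chain rule to $C(t) = \sigma^{-2}(S_0 e^{g(t)}) \bigl[(g'(t))^{2}-\rho^{2}\bigr]$ yields
\begin{equation*}
\frac{dC}{dt} = \frac{2 g'(t) g''(t)}{\sigma^{2}(S_0 e^{g(t)})}
- \frac{2 S_0 e^{g(t)}\sigma'(S_0 e^{g(t)})\, g'(t)}{\sigma^{3}(S_0 e^{g(t)})}\bigl[(g'(t))^{2}-\rho^{2}\bigr].
\end{equation*}
Factoring out $\tfrac{2 g'(t)}{\sigma^{2}(S_0 e^{g(t)})}$ leaves the bracket
\begin{equation*}
g''(t) - S_0 e^{g(t)}\,\frac{\sigma'(S_0 e^{g(t)})}{\sigma(S_0 e^{g(t)})}\bigl[(g'(t))^{2}-\rho^{2}\bigr],
\end{equation*}
which vanishes identically by the Euler-Lagrange equation (\ref{ELg}). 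Therefore $dC/dt \equiv 0$, so $C$ is constant along any solution of (\ref{ELg}), as claimed.

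There is no real obstacle here; the computation is only a few lines, and the only mild care needed is the chain rule for $\sigma^{2}(S_0 e^{g(t)})$, whose derivative in $t$ is $2 S_0 e^{g(t)}\sigma(S_0 e^{g(t)})\sigma'(S_0 e^{g(t)}) g'(t)$. Assumption~\ref{assump:main} guarantees that $\sigma$ is differentiable and bounded away from zero, so all the quantities appearing in the manipulation are well defined. I would present the proof in the direct form (differentiate and use the EL equation), optionally remarking that $C/2$ is precisely the conserved Hamiltonian associated with the time-homogeneous Lagrangian $L$, since this observation will be useful for the explicit solution of the variational problem in the remainder of Section~\ref{sec:variational}.
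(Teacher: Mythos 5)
Your direct computation (differentiate $C(t)$, factor, and substitute the Euler-Lagrange equation) is exactly the paper's proof, just spelled out in more detail; the calculation is correct. The added observation that $C/2$ is the conserved Hamiltonian of the time-homogeneous Lagrangian is a nice conceptual gloss, but it does not change the route — the paper takes the same direct-differentiation approach.
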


\begin{proof}
The result follows by explicit computation of the derivative
\begin{equation}
\frac{d}{dt} \left( \frac{1}{\sigma^2(S_0 e^{g(t)} )} \left[ (g'(t))^2 - \rho^2\right]\right) = 0\,.
\end{equation}
Substituting here the equation (\ref{ELg}) yields the result shown.
\end{proof}

The constant $C$ is related to the derivative $g'(0)$ as
\begin{equation}
C = \frac{1}{\sigma^2(S_0)} [(g'(0))^2 - \rho^2]\,.
\end{equation}
This implies that the range of possible values for $C$ is
\begin{equation}
C \in \Big[-\frac{\rho^2}{\sigma^2(S_0)} , \infty\Big) \,.
\end{equation}
The minimal value of this constant corresponds to the trajectory with $g'(0)=0$.

The conservation of the quantity $C$ along each solution of the Euler-Lagrange equation 
can be used to classify the solutions of this equation into several groups, based on the sign of $C$. 

\subsection{Trajectories classification}
\label{sec:3.1}

The optimal trajectories $g(t)$ can be classified into 3 distinct classes.
Recall that any trajectory joins the origin $g(0)=0$ with
$g(1) = \log\frac{K}{S_0}$.

Define the three regions (see Figure~\ref{Fig:1} for a graphical representation):
\begin{enumerate}

\item Region 1: $g(t) \geq |\rho | t$. This region contains trajectories with
$K \geq S_0 e^{|\rho |}$. For either sign of $\rho$ this corresponds to OTM call options.

\item Region 2: $g(t) \leq -|\rho | t$. This region contains trajectories with 
$K \leq S_0 e^{-|\rho |}$. For either sign of $\rho$ this corresponds to OTM put options.

\item Region 3: $-|\rho | t < g(t) < |\rho | t$. This region contains trajectories with 
$S_0 e^{-| \rho |} < K < S_0 e^{| \rho |}$. For $\rho>0$ this region corresponds to
OTM put options, and for $\rho <0$ to OTM call options. 

\end{enumerate}

Consider first a trajectory with $C>0$. 
This has either $g'(0)> |\rho |$ or $g'(0) < - |\rho |$.
By continuity of $g'(t)$ on $t\in [0,1 ]$, the same inequalities are preserved for all $t\in [0,1]$. 
Such a trajectory can belong either to the region denoted 1 or 2. 

Any trajectory with $C<0$ has $-|\rho | < g'(t) < |\rho |$. From  $g(t) = \int_0^t g'(s) ds$ we have
$| g(t) | \leq \int_0^t |g'(s)| ds \leq \rho t$, which implies that the trajectory is contained in the triangular region 3. 

In regions 1 and 2 the derivative $g'(t)$ is bounded from below (above) by 
$|\rho |$ ($-|\rho |$), so it can never vanish.
On the other hand, the slope $g'(t)$ of a trajectory in region 3 may vanish at some point $t_*\in [0,1]$ and change sign.

\begin{figure}[h]
\centering
\includegraphics[width=8cm]{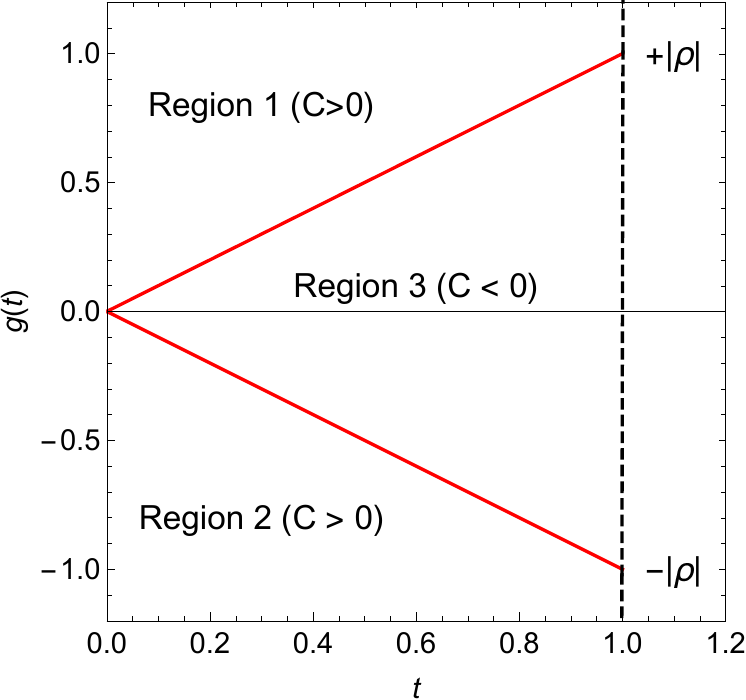}
\caption{The optimal trajectories $g(t)$ may belong to one of three regions, with definite signs of $C$ as shown.}
\label{Fig:1}
\end{figure}

We study the determination of the constant $C$ for each type of trajectory. 

i) Trajectories in region 1. For this case $g'(t)\geq |\rho | > 0$ is positive.
The function $g(t)$ is monotonically increasing with slope
\begin{equation}
g'(t)=\sqrt{C \sigma^{2}(e^{g(t)})+\rho^{2}} \,.
\end{equation}

Integrating over $t\in [0,1]$ we have
\begin{equation}\label{eqC1}
1=\int_{0}^{1}\frac{g'(t)dt}{\sqrt{C \sigma^{2}(e^{g(t)})+\rho^{2}}}
=\int_{0}^{\log(K/S_0)}\frac{dx}{\sqrt{C \sigma^{2}(S_0 e^x)+\rho^{2}}}.
\end{equation}
This uniquely determines the constant $C$ for given $(S_0,K)$.

ii) Trajectories in region 2. For this case $g'(t)\leq- |\rho | < 0$ is negative. 
$g(t)$ is monotonically decreasing, with slope
\begin{equation}
g'(t)=- \sqrt{C \sigma^{2}(e^{g(t)})+\rho^{2}} \,.
\end{equation}

Integration gives
\begin{equation}\label{eqC2}
1=\int_{0}^{1}\frac{-g'(t)dt}{\sqrt{ C \sigma^{2}(e^{g(t)})+\rho^{2}}}
=\int_{\log(K/S_0)}^{0}\frac{dx}{\sqrt{C \sigma^{2}(S_0 e^x)+\rho^{2}}}.
\end{equation}
This uniquely determines the constant $C$.

iii) Trajectories in region 3. 
As mentioned above, for this case $g'(t)$ can change sign.
Let us study first the possible number of sign changes.
The following result shows that convexity can restrict the number of sign changes.

\begin{proposition}
Assume that $\sigma(S)$ is monotonically decreasing (increasing) in the region $S_0 e^{-|\rho |} < S < S_0 e^{|\rho |}$.
Then $g'(t)$ may have at most one zero.
\end{proposition}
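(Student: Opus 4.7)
The plan is to combine the conservation law \eqref{Cdef} with the Euler--Lagrange equation \eqref{ELg} to show that $g''(t)$ has a single sign on $[0,1]$, so that $g'$ is strictly monotone and can vanish at most once.

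First I would recall from Section~\ref{sec:3.1} that trajectories in region 3 correspond to $C<0$. The conservation law \eqref{Cdef} then gives $(g'(t))^{2}-\rho^{2}=C\,\sigma^{2}(S_{0}e^{g(t)})<0$ for every $t\in[0,1]$, so the bracketed factor on the right-hand side of \eqref{ELg} is strictly negative throughout the interval. Next, since the trajectory lies in region 3, we have $g(0)=0$ and $|g(t)|<|\rho|\,t\le|\rho|$ for $t\in(0,1]$, so $S_{0}e^{g(t)}$ stays inside $(S_{0}e^{-|\rho|},S_{0}e^{|\rho|})$, which is precisely the interval on which $\sigma$ is assumed strictly monotone. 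Hence $\sigma'(S_{0}e^{g(t)})$ keeps a single sign along the trajectory, and together with the positivity of $\sigma$ and of $S_{0}e^{g(t)}$ the prefactor $S_{0}e^{g(t)}\,\sigma'(S_{0}e^{g(t)})/\sigma(S_{0}e^{g(t)})$ appearing in \eqref{ELg} is sign-definite as well.

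Combining these two observations, the right-hand side of \eqref{ELg} has one definite sign on $[0,1]$---positive when $\sigma$ is decreasing and negative when $\sigma$ is increasing---so $g'(t)$ is strictly monotone on $[0,1]$ and therefore has at most one zero. I do not foresee a serious obstacle here; the only delicate point is what ``monotonically decreasing (increasing)'' is allowed to mean: if $\sigma$ were permitted to have a flat piece inside the relevant interval, then $g''$ could vanish on a subinterval along which $g'$ is held at zero, so the statement really needs the strict interpretation of monotonicity, which is the natural reading in this context.
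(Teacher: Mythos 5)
Your proof is correct and takes essentially the same route as the paper: the paper also substitutes the conservation law into the Euler--Lagrange equation to obtain $g''(t) = C\,\sigma(S_0 e^{g(t)})\,\sigma'(S_0 e^{g(t)})\,S_0 e^{g(t)}$, then uses $C<0$ together with the sign of $\sigma'$ to conclude that $g$ is convex (resp.\ concave) and hence $g'$ vanishes at most once. Your closing remark about strict versus weak monotonicity is a reasonable observation, but it is a commentary on the hypothesis rather than a gap in the argument.
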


\begin{proof}
The Euler-Lagrange equation gives
\begin{equation}
g''(t) = C \sigma(e^{g(t)} ) \sigma'(e^{g(t)}) e^{g(t)} \,.
\end{equation}

In region 3 we have $C<0$. 
Thus, if $\sigma(S)$ is decreasing (increasing) in this region, then $g(t)$ is convex (concave). 
This implies that $g'(t)$ may have at most one zero. 
\end{proof}

We assume in the following that $\sigma(S)$ is monotonic in the region 
$S_0 e^{-| \rho |} < S < S_0 e^{| \rho |}$.
For given $K\in (S_0 e^{-|\rho |}, S_0 e^{|\rho |})$, 
two types of solutions are possible: 

a) $g'(0)$ and $g'(1)$ have the same sign. Then $g'(t)$ does not have a sign change, and the equation for $C$ reduces to
(\ref{eqC1}) (for $g'(t)>0$) or (\ref{eqC2}) (for $g'(t)<0$).

b) $g'(0)$ and $g'(1)$ have opposite signs. Thus $g'(t)=0$ at some point $t_*$. For definiteness assume $g'(0)<0, g'(1)<0$.
Then the equation for $C$ reads
\begin{equation}
1=\int_0^{t_*} \frac{g'(t)dt}{\sqrt{C \sigma^{2}(e^{g(t)})+\rho^{2}}}
+ \int_{t_*}^1 \frac{-g'(t)dt}{\sqrt{C \sigma^{2}(e^{g(t)})+\rho^{2}}}\,,
\end{equation}
or equivalently
\begin{equation}\label{eqC3}
1 
=\int_{u_*}^0\frac{du}{\sqrt{C \sigma^{2}(S_0 e^u)+\rho^{2}}} +
\int_{u_*}^{\log(K/S_0)} \frac{du}{\sqrt{C \sigma^{2}(S_0 e^u)+\rho^{2}}}\,,
\end{equation}
with $u_* := g(t_*)$.

\subsection{Solution for the rate function}

The rate function of Theorem~\ref{thm:LD} can be expressed in terms of quadratures, separately in each of the three regions introduced above. 

\begin{proposition}\label{prop:I}
The rate function $I(K,S_0)$ is given by the following result. 

i) For $K \geq S_0 e^{|\rho |}$ (region 1), we have
\begin{equation}\label{Isol1}
I(K,S_0) = \frac12 \int_0^{\log(K/S_0)} 
\frac{(\sqrt{C \sigma^2(S_0 e^{g}) + \rho^2}-\rho)^2}
{\sigma^2(S_0 e^{g}) \sqrt{C \sigma^2(S_0 e^{g}) + \rho^2}   } dg\,,
\end{equation}
where $C$ is found as the solution of (\ref{eqC1}).

ii) For $0 < K \leq S_0 e^{-|\rho |}$ (region 2), we have
\begin{equation}\label{Isol2}
I(K,S_0) = \frac12 \int_{\log(K/S_0)}^0
\frac{(\sqrt{C \sigma^2(S_0 e^{g}) + \rho^2} + \rho)^2}
{\sigma^2(S_0 e^{g}) \sqrt{C \sigma^2(S_0 e^{g}) + \rho^2}   } dg\,,
\end{equation}
where $C$ is found as the solution of (\ref{eqC2}).

iii) For $S_0 e^{-|\rho |} < K < S_0 e^{| \rho |}$ (region 3), we distinguish further
between the two cases where

a) $g(t)$ is monotonic on $t\in [0,1]$.
For this case the rate function is given by (\ref{Isol1}) if $g(t)$ is increasing,
and by (\ref{Isol2}) if $g'(t)$ is decreasing.

b) $g'(0),g'(1)$ have opposite sign, and $g'(t)$ changes sign only once.
Take for definiteness $g'(0)<0,g'(1)>0$. Then the rate function is
\begin{equation}
I(K,S_0) = 
 \frac12 \int_{g_*}^0 
\frac{(\sqrt{C \sigma^2(S_0 e^{g}) + \rho^2}-\rho)^2}
{\sigma^2(S_0 e^{g}) \sqrt{C \sigma^2(S_0 e^{g}) + \rho^2}   } dg +
\frac12 \int_{g_*}^{\log(K/S_0)}
\frac{(\sqrt{C \sigma^2(S_0 e^{g}) + \rho^2} + \rho)^2}
{\sigma^2(S_0 e^{g}) \sqrt{C \sigma^2(S_0 e^{g}) + \rho^2}   } dg\,,
\end{equation}
where $C$ is found as the solution of (\ref{eqC3}) and $g_* = g(t_*)$ with
$g'(t_*)=0$.

\end{proposition}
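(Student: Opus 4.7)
The plan is to reduce each integral for $I(K,S_{0})$ to a quadrature in the spatial variable $u = g(t)$ by invoking the conservation law of the previous proposition. Along any optimizer, $(g'(t))^{2} = C\sigma^{2}(S_{0}e^{g(t)}) + \rho^{2}$, so on any sub-interval where $g'$ has constant sign $\epsilon \in \{+1,-1\}$ we can solve
\begin{equation*}
g'(t) = \epsilon\sqrt{C\sigma^{2}(S_{0}e^{g(t)}) + \rho^{2}},
\end{equation*}
and change variables from $t$ to $u = g(t)$ via $dt = du/g'(t)$. After substitution, the Lagrangian becomes
\begin{equation*}
\frac{(g'(t)-\rho)^{2}}{\sigma^{2}(S_{0}e^{g(t)})}\,dt = \frac{(\epsilon\sqrt{C\sigma^{2}(S_{0}e^{u}) + \rho^{2}}-\rho)^{2}}{\sigma^{2}(S_{0}e^{u})}\cdot\frac{\epsilon\,du}{\sqrt{C\sigma^{2}(S_{0}e^{u}) + \rho^{2}}},
\end{equation*}
which is the common building block for all three cases.

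For Region~1, the classification in Section~3.1 gives $\epsilon=+1$ throughout $[0,1]$, and the boundary conditions send $u$ monotonically from $0$ to $\log(K/S_{0})>0$; the formula (\ref{Isol1}) follows at once, with $C$ fixed by (\ref{eqC1}). For Region~2, $\epsilon=-1$ throughout, and $u$ runs from $0$ down to $\log(K/S_{0})<0$; flipping the limits absorbs the factor of $\epsilon$ and produces (\ref{Isol2}) with $(\sqrt{\cdot}+\rho)^{2}$ in the numerator, $C$ fixed by (\ref{eqC2}). The monotonic sub-case (a) of Region~3 reduces immediately to one of the previous two cases since on each such trajectory $g'(t)$ never vanishes.

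For the non-monotonic sub-case (b) of Region~3, I would invoke the proposition that restricts the number of sign changes of $g'$ to at most one, so that the trajectory decomposes into exactly two monotonic segments joined at the unique critical point $t_{*}$ where $g'(t_{*})=0$, with $u_{*}=g_{*}=g(t_{*})$. Splitting $\int_{0}^{1}=\int_{0}^{t_{*}}+\int_{t_{*}}^{1}$ and applying the single-sign substitution separately on each segment—using the fact that the two segments carry opposite signs of $\epsilon$—yields the sum of two quadratures; the parameter $C$ is then pinned down by the unit-time constraint (\ref{eqC3}), which is exactly the sum of the lengths of the two $u$-intervals expressed in $t$-time.

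The only genuine work is sign bookkeeping on the two segments in case (b): one must keep track of which branch of $\pm\sqrt{C\sigma^{2}+\rho^{2}}$ applies, and in which direction $u$ traverses the corresponding interval, so that the two integrands combine into the $(\sqrt{\cdot}\mp\rho)^{2}$ numerators with the correct limits $g_{*}\to 0$ and $g_{*}\to\log(K/S_{0})$. There is no analytic obstacle—the conservation law together with the monotonicity/sign structure established in Section~3.1 converts the variational problem into an explicit quadrature; the main care needed is in the geometric interpretation of the substitution when $g'$ reverses sign.
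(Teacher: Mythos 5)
Your reduction is exactly the intended one: the paper states Proposition~\ref{prop:I} without a written proof because it is a direct consequence of the conserved quantity $C$ and the trajectory classification just established, and your argument supplies precisely that derivation. Using $(g'(t))^{2}=C\sigma^{2}(S_{0}e^{g(t)})+\rho^{2}$ to substitute $u=g(t)$, $dt=du/g'(t)$, on each sign-definite segment of $g'$ is the only mechanism available, and your ``common building block'' formula is the correct integrand after the change of variables. Cases (i), (ii) and (iii)(a) then fall out immediately, and the unique-zero proposition justifies the two-piece split in case (iii)(b). So the method is sound and matches the implicit proof.

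One remark on the part you left as ``sign bookkeeping'': if you actually carry it out you will find that, with the stated convention $g'(0)<0$, $g'(1)>0$ (so $g_{*}<0$), the first segment has $g'=-\sqrt{C\sigma^{2}+\rho^{2}}$, hence $(g'-\rho)^{2}=(\sqrt{C\sigma^{2}+\rho^{2}}+\rho)^{2}$, and the change of variables gives
\begin{equation*}
\frac12\int_{g_{*}}^{0}\frac{\bigl(\sqrt{C\sigma^{2}(S_{0}e^{g})+\rho^{2}}+\rho\bigr)^{2}}{\sigma^{2}(S_{0}e^{g})\sqrt{C\sigma^{2}(S_{0}e^{g})+\rho^{2}}}\,dg
+\frac12\int_{g_{*}}^{\log(K/S_{0})}\frac{\bigl(\sqrt{C\sigma^{2}(S_{0}e^{g})+\rho^{2}}-\rho\bigr)^{2}}{\sigma^{2}(S_{0}e^{g})\sqrt{C\sigma^{2}(S_{0}e^{g})+\rho^{2}}}\,dg\,,
\end{equation*}
i.e.\ the $(\sqrt{\cdot}+\rho)^{2}$ and $(\sqrt{\cdot}-\rho)^{2}$ numerators appear in the opposite positions from those printed in the proposition (which seems to be a typo in the statement, consistent with a sign slip in the text around \eqref{eqC3} where ``$g'(0)<0,g'(1)<0$'' is written). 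Your procedure would have produced the corrected version, but since you deferred the explicit computation to ``bookkeeping'' you did not surface this. It would strengthen the write-up to perform that last step rather than only describing it.
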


\begin{remark} \label{rmk:1}
The rate function vanishes at $K=F(T) = S_0 e^\rho$, which corresponds to the 
at-the-money options. That is, 
\begin{equation}
I(K=S_0 e^\rho, S_0) = 0 \,.
\end{equation}
At this point the optimizer is $g(t) = \rho t$ and $C=0$. 

i) For $\rho > 0$ this point is the boundary of regions 1 and 3. 
The integrand in (\ref{Isol1}) vanishes, which gives $I(K=S_0 e^\rho,S_0)=0$.

ii) For $\rho <0$ this point is the boundary of regions 2 and 3. 
The integrand in (\ref{Isol2}) vanishes, which gives again $I(K=S_0 e^\rho,S_0)=0$.
\end{remark}

The asymptotic result of Theorem~\ref{thm:LD} 
is equivalent with a prediction for the short maturity 
limit of the implied volatility, which generalizes the BBF result (\ref{sigBBF}) to
non-zero $\rho$. We denote it as
\begin{equation}\label{sigBBFrho}
\lim_{T\to 0, r T = \rho} \sigma^2_{BS}(K,S_0,T) = \frac{(k - \rho)^2}{2I(K,S_0)} := \sigma_{\mathrm{BBF},\rho}(K,S_0,\rho)\,,
\end{equation}
where we denoted the log-strike as $k := \log\frac{K}{S_0}$. 
With non-zero $\rho$, this is different from the
log-moneyness $x = \frac{K}{F(T)} = k - \rho$.

\subsection{At-the-money implied volatility}

We give here an analytical result for the asymptotic implied volatility 
of an at-the-money option, including interest rates effects.

\begin{proposition}\label{prop:leading}
We have
\begin{equation}\label{sigRhoATM}
\sigma_{\mathrm{BBF},\rho}^2(K=S_0e^\rho, S_0) = 
\frac{1}{\rho} \int_0^\rho \sigma^2(S_0 e^u ) du \,.
\end{equation}
\end{proposition}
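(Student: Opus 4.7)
The ATM point $K = S_0 e^\rho$ is a removable $0/0$ singularity of the formula $\sigma_{\mathrm{BBF},\rho}^2 = (k-\rho)^2/(2I)$, since both the numerator $(k-\rho)^2$ and, by Remark \ref{rmk:1}, the rate function $I(K,S_0)$ vanish there. The plan is therefore to expand $I(K,S_0)$ to leading order in the small parameter $\epsilon := k - \rho = \log(K/S_0) - \rho$ and read off the limit.

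Without loss of generality suppose $\rho > 0$ and $\epsilon > 0$, so that we are in region 1 of Section \ref{sec:3.1} and the formulas (\ref{eqC1}) and (\ref{Isol1}) apply (the cases of opposite signs are handled symmetrically via (\ref{eqC2}), (\ref{Isol2})). At $\epsilon = 0$ the optimizer is $g(t) = \rho t$ with $C = 0$; by continuity we expect $C \to 0$ as $\epsilon \to 0$, and the first step is to determine the leading relationship between $C$ and $\epsilon$. Expanding the integrand in (\ref{eqC1}) for small $C$ via $(C\sigma^2 + \rho^2)^{-1/2} = \rho^{-1}(1 - C\sigma^2/(2\rho^2) + O(C^2))$ and using $k = \rho + \epsilon$ gives
\begin{equation*}
1 = \frac{\rho + \epsilon}{\rho} - \frac{C}{2\rho^3}\int_0^{\rho+\epsilon}\sigma^2(S_0 e^{u})\,du + O(C^2),
\end{equation*}
which yields $C = \dfrac{2\rho^2\,\epsilon}{\int_0^{\rho}\sigma^2(S_0 e^u)\,du} + O(\epsilon^2)$, so $C$ is linear in $\epsilon$ at leading order.

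Next I would expand the rate function (\ref{Isol1}). Using $\sqrt{C\sigma^2 + \rho^2} - \rho = C\sigma^2/(2\rho) + O(C^2)$, the numerator of the integrand is $(C\sigma^2)^2/(4\rho^2) + O(C^3)$ and the denominator is $\rho\sigma^2 + O(C)$, so the integrand equals $C^2\sigma^2(S_0 e^u)/(4\rho^3) + O(C^3)$. Integrating,
\begin{equation*}
I(K,S_0) = \frac{C^2}{8\rho^3}\int_0^{\rho+\epsilon}\sigma^2(S_0 e^u)\,du + O(C^3).
\end{equation*}
Substituting the leading expression for $C$ and keeping only the $O(\epsilon^2)$ term, one factor of $\int_0^\rho \sigma^2(S_0 e^u)\,du$ cancels against the one in the denominator of $C^2$, giving
\begin{equation*}
I(K,S_0) = \frac{\rho\,\epsilon^2}{2\int_0^\rho \sigma^2(S_0 e^u)\,du} + O(\epsilon^3).
\end{equation*}
Plugging this into $\sigma^2_{\mathrm{BBF},\rho} = (k-\rho)^2/(2I) = \epsilon^2/(2I)$ and letting $\epsilon \to 0$ yields the claimed formula (\ref{sigRhoATM}).

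The main obstacle I anticipate is organizing the expansion cleanly: both $C$ and the integration endpoint depend on $\epsilon$, and one must track orders consistently so that the $C$-linear term of the $C$-equation and the $C^2$-term of the rate function combine to give a finite, $\sigma(\cdot)$-weighted limit. A minor additional point is to verify that the two sign cases ($\rho>0$ versus $\rho<0$), handled respectively by regions 1 and 2, produce the same result; this follows because (\ref{Isol2}) is obtained from (\ref{Isol1}) by the replacement $\rho \to -\rho$ and reversing the integration limits, and the final formula $\frac{1}{\rho}\int_0^\rho \sigma^2(S_0 e^u)\,du$ is invariant under this symmetry.
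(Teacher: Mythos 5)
Your proposal is correct and follows essentially the same route as the paper: expand $C$ linearly in the log-moneyness via (\ref{eqC1}) (your $c_0 = 2\rho^2/\int_0^\rho\sigma^2(S_0 e^u)\,du$ matches the paper's coefficient), expand the integrand of (\ref{Isol1}) to order $C^2$, and read off the quadratic coefficient of the rate function. The only cosmetic difference is notational (your $\epsilon$ is the paper's $x$), and the paper additionally records the $O(x^3)$ coefficients $c_1, a_1$ for use in the subsequent skew proposition.
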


The asymptotic ATM implied variance with non-zero $\rho$ is an average of the 
local variance over a range of $S$ between the spot price $S_0$ and the forward
price $S_0e^\rho$. This is in contrast to the $\rho=0$ case, where the asymptotic
ATM implied volatility depends only on the spot local volatility. 

\begin{proof}[Proof of Proposition~\ref{prop:leading}]
As noted in Remark~\ref{rmk:1}, the rate function vanishes at $K=S_0 e^\rho$,
corresponding to the ATM point. Let us study the expansion of the rate function around 
this point. We will show that this expansion has the form
\begin{equation}
I(K,S_0) = a_0 x^2 + a_1 x^3 + O(x^4)\,,
\end{equation}
where $x = \log\frac{K}{S_0 e^\rho} $ is the option log-moneyness.

The ATM asymptotic implied volatility is determined by the coefficient $a_0$ as
\begin{equation}\label{eqATM}
\sigma_{\mathrm{BBF},\rho}^2(K=S_0e^\rho,S_0) = \frac{1}{2a_0} \,.
\end{equation}

The constant $C$ associated with the optimal path at $K=S_0 e^\rho$ vanishes.
It can be expanded in powers of the log-moneyness $x$  as
\begin{equation}\label{Cexp}
C(x) = c_0 x + c_1 x^2 + O(x^3)\,.
\end{equation}
Next, we determine the coefficient $c_0$. (The coefficient $c_1$ will
be derived below in (\ref{c1sol}).) 
We give the proof only for $\rho>0$
when $C$ is given by the solution of (\ref{eqC1}). The case $\rho<0$ is handled in a similar way.
The coefficients $c_j$ can be determined by taking derivatives of the relation (\ref{eqC1}) and taking $x\to 0$. At leading order this yields
\begin{equation}
\frac{1}{\rho} - \frac12 c_0 \int_0^\rho \frac{\sigma^2(S_0 e^u)}{\rho^3} du = 0\,,
\end{equation}
which gives
\begin{equation}\label{c0sol}
c_0 = \frac{2\rho^2}{\int_0^\rho \sigma^2(S_0 e^u) du } \,.
\end{equation}

Substituting the expansion (\ref{Cexp}) in the integrand of (\ref{Isol1}) 
and expanding in $x$ gives
\begin{equation}\label{Lexp}
\frac{(\sqrt{C(x) \sigma^2(S_0 e^{g}) + \rho^2}-\rho)^2}
{\sqrt{C(x) \sigma^2(S_0 e^{g}) + \rho^2}   } = \frac{c_0^2 \sigma^4(S_0 e^g)}{4\rho^3} x^2 + \frac{2 c_0 c_1 \rho^2 \sigma^2(S_0 e^g) - c_0^3 \sigma^4(S_0 e^g)}{8\rho^5} 
x^3 + O(x^4) \,.
\end{equation}
This gives the leading term in the expansion of the rate function 
\begin{equation}
I(K,S_0) = \frac{c_0^2}{8\rho^3} \int_0^\rho \sigma^2(S_0 e^u) du \cdot x^2 + O(x^3)\,,
\end{equation}
which yields
\begin{equation}
a_0 = \frac{c_0^2}{8\rho^3} \int_0^\rho \sigma^2(S_0 e^u) du \,.
\end{equation}
Substituting into (\ref{eqATM}) gives
\begin{equation}
\sigma_{\mathrm{BBF},\rho}^2(K=S_0e^\rho,S_0) = \frac{1}{2a_0} = \frac{4\rho^3}
{4\rho^4 } \int_0^\rho \sigma^2(S_0 e^u) du\,,
\end{equation}
which reproduces the stated result (\ref{sigRhoATM}).
\end{proof}

\subsection{The ATM implied volatility skew}

The ATM skew is defined as
\begin{equation}
s(T) = \frac{d}{dx} \sigma(K,S_0)|_{x=0} = K \frac{d}{dK} \sigma(K,S_0)|_{K=S_0}\,.
\end{equation}
It is well-known that the ATM skew of the short-maturity asymptotics of the implied volatility is one-half of the ATM skew of the local volatility, see e.g. \cite{Lee2004}.
\begin{equation}\label{onehalfrule}
K\frac{d}{dK} \sigma_{\mathrm{BBF}}(K,S_0)|_{K=S_0} = \frac12 S_{0} \frac{d}{dS}\sigma(S_0)\,.
\end{equation}
We present next the
generalization of this result to the short-maturity $T\to 0$ 
asymptotics of the implied volatility, taken at finite and fixed $(r-q)T = \rho$. 
In contrast to the result (\ref{onehalfrule}), under the small-$T$ limit at fixed $\rho$, the ATM skew depends on an weighted average of the local volatility in a range of values between spot $S_0$ and forward $S_0 e^\rho$.


\begin{proposition}
Denote $\sigma_{\mathrm{BBF},\rho}(ATM) := \sigma_{\mathrm{BBF},\rho}(K=S_0 e^\rho,S_0)$ which is given by (\ref{sigRhoATM}).

We have 
\begin{equation}\label{sATM}
\frac{1}{\sigma_{\mathrm{BBF},\rho}(ATM)}
\frac{d}{dx} \sigma_{\mathrm{BBF},\rho}(ATM) = 
- \frac12 \cdot
\frac{\int_0^\rho \sigma^2(S_0 e^u) [ \sigma^2(S_0 e^u) - \sigma^2(S_0 e^\rho) ] du}
{\left( \int_0^\rho \sigma^2(S_0 e^u) du \right)^2}\,.
\end{equation}
\end{proposition}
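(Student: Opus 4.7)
The plan is to push the expansion of the rate function $I(K,S_0)$ to one higher order in the log-moneyness $x := \log(K/(S_0 e^\rho))$ than was done in the proof of Proposition~\ref{prop:leading}, and to read off the skew from this expansion. Since $\sigma^2_{\mathrm{BBF},\rho}(K,S_0) = x^2/(2I(K,S_0))$ and $I$ vanishes at $x=0$ by Remark~\ref{rmk:1}, writing $I = a_0 x^2 + a_1 x^3 + O(x^4)$ gives
\begin{equation*}
\sigma_{\mathrm{BBF},\rho}(K,S_0) = \frac{1}{\sqrt{2a_0}}\Bigl(1 - \frac{a_1}{2a_0}x + O(x^2)\Bigr),
\end{equation*}
so the quantity in (\ref{sATM}) equals $-a_1/(2a_0)$. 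The leading coefficient is already available: with $J := \int_0^\rho \sigma^2(S_0 e^u)\,du$ and $c_0 = 2\rho^2/J$ from Proposition~\ref{prop:leading}, one has $a_0 = c_0^2 J/(8\rho^3) = \rho/(2J)$. The task is therefore to compute $a_1$.

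I will work with $\rho > 0$ (the case $\rho<0$ being symmetric via (\ref{Isol2})) and approach the ATM point from $x>0$, so that formula (\ref{Isol1}) applies with upper integration limit $\rho + x$. Writing $C(x) = c_0 x + c_1 x^2 + O(x^3)$ for the conserved constant, I obtain $c_1$ by differentiating the constraint (\ref{eqC1}) twice in $x$ at $x=0$; the computation mixes differentiation under the integral sign with the endpoint derivative from the $x$-dependent upper limit and yields
\begin{equation*}
c_1 = \frac{\rho^2\bigl(3 J_2 - 2\sigma^2(S_0 e^\rho) J\bigr)}{J^3}, \qquad J_2 := \int_0^\rho \sigma^4(S_0 e^u)\,du.
\end{equation*}

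Next, I Taylor-expand the integrand of (\ref{Isol1}) to order $x^3$ via the expansion of $(\sqrt{1+\epsilon}-1)^2/\sqrt{1+\epsilon}$ with $\epsilon = C(x)\sigma^2(S_0 e^g)/\rho^2$, producing a representation $A(g)x^2 + B(g)x^3 + O(x^4)$ with $A$, $B$ explicit in $c_0$, $c_1$, and $\sigma$. Integrating over $g \in [0,\rho+x]$ and splitting the integral at $\rho$ produces three contributions to $a_1$: the integral of $B$ over $[0,\rho]$, the cubic-in-$C$ correction in the integrand, and a boundary term $\tfrac12 A(\rho)x^3$ coming from extending the limit from $\rho$ to $\rho+x$ in the leading $A(g)x^2$ piece. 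Substituting $c_0$ and $c_1$ and simplifying produces substantial cancellations and yields $a_1 = \rho[J_2 - \sigma^2(S_0 e^\rho) J]/(2J^3)$.

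Forming the ratio $-a_1/(2a_0) = -\bigl(J_2 - \sigma^2(S_0 e^\rho) J\bigr)/(2J^2)$ and recognizing the numerator as $\int_0^\rho \sigma^2(S_0 e^u)[\sigma^2(S_0 e^u) - \sigma^2(S_0 e^\rho)]\,du$ yields exactly (\ref{sATM}). The main obstacle will be the simultaneous bookkeeping of the three competing order-$x^3$ effects listed above; individually each is messy, but together they must (and do) assemble into the clean form on the right-hand side of (\ref{sATM}).
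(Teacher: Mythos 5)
Your proposal is correct and follows essentially the same route as the paper: expand the conserved constant $C(x)=c_0x+c_1x^2+O(x^3)$ using the constraint (\ref{eqC1}), Taylor-expand the integrand of (\ref{Isol1}) to order $x^3$ while tracking the $x$-dependent endpoint, collect $a_0,a_1$, and read the normalized skew off as $-a_1/(2a_0)$. Your intermediate values ($c_1=\rho^2(3J_2-2\sigma^2(S_0e^\rho)J)/J^3$, $a_0=\rho/(2J)$, $a_1=\rho[J_2-\sigma^2(S_0e^\rho)J]/(2J^3)$) all agree with the paper's $c_1$ in (\ref{c1sol}) and its expansion of $I(K,S_0)$, so the final ratio reproduces (\ref{sATM}).
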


\begin{proof}
It is convenient to introduce the following notations
\begin{equation}
I_2(x) := \int_0^{\rho+x} \sigma^2(S_0 e^u) du\,, 
\qquad
I_4(x) := \int_0^{\rho+x} \sigma^4(S_0 e^u) du\,.
\end{equation}
We will denote the values of these integrals at the ATM point $x=0$ as
$I_{2,4} := I_{2,4}(0)$.

Using the same approach as in the proof of 
Proposition~\ref{prop:leading},
we obtain the coefficient of the $O(x^2)$ term in the expansion of $C(x)$ in \eqref{Cexp}
\begin{equation}\label{c1sol}
c_1 = -2\rho^2 \frac{\sigma^2(S_0 e^\rho)}{(I_2)^2} + 3\rho^2 \frac{I_4}{(I_2)^3}\,.
\end{equation}

The expansion of the rate function $I(K,S_0)$ in powers of $x$ is obtained by integrating (\ref{Lexp}). Substituting into this result the expressions for $c_0$ from 
(\ref{c0sol}) and $c_1$ from (\ref{c1sol}) we get the expansion to $O(x^3)$:
\begin{equation}
I(K,S_0) = \frac12 \rho \frac{1}{(I_2)^2} I_2(x) x^2 + 
 \left\{-\rho \frac{\sigma^2(S_0 e^\rho)}{(I_2)^2} + \frac12 \rho \frac{I_4}{(I_2)^3} \right\} x^3 + O(x^4)\,.
\end{equation}
Expanding further $I_2(x) = I_2 + \sigma^2(S_0 e^\rho) x + O(x^2)$ in the first term, 
gives
\begin{equation}
I(K,S_0) = \rho \frac{1}{2I_2} x^2 + 
\rho \cdot \frac{1}{2(I_2)^3} \left\{ I_4 - \sigma^2(S_0 e^\rho) I_2 \right\} x^3 + O(x^4)\,.
\end{equation}

Using the relation of the rate function to the asymptotic implied volatility yields the 
stated result for the $O(x)$ term in the asymptotic implied volatility.  
This completes the proof.
\end{proof}

\textit{Limiting case $\rho\to 0$.} We show that in the limit $\rho\to 0$,
we recover the result (\ref{onehalfrule}) for the short-maturity asymptotics of the skew
in the local volatility model in the absence of interest rates effects.
The $\rho\to 0$ limit of the ratio (\ref{sATM}) can be
evaluated using the L'H\^{o}spital rule:
\begin{equation}
\lim_{\rho\to 0}
\frac{\int_0^\rho \sigma^2(S_0 e^u) [ \sigma^2(S_0 e^u) - \sigma^2(S_0 e^\rho) ] du}
{\left( \int_0^\rho \sigma^2(S_0 e^u) du \right)^2}=
\lim_{\rho\to 0}
\frac{-\frac{d}{d\rho} \sigma^2(S_0 e^\rho)}{2\sigma^2 (S_0 e^\rho)}
= -S_0 \frac{\sigma'(S_0)}{\sigma(S_0)} \,.
\end{equation}
This gives the ATM skew:
\begin{equation}
\lim_{\rho\to 0} 
\frac{1}{\sigma_{\mathrm{BBF},\rho}(ATM)}
\frac{d}{dx} \sigma_{\mathrm{BBF},\rho}(ATM) = 
\frac12 S_0 \frac{\sigma'(S_0)}{\sigma(S_0)}\,, 
\end{equation}
which reproduces the well-known result (\ref{onehalfrule}).

\subsection{Leading $O(\rho)$ correction}

The $O(\rho)$ correction to the rate function can be obtained in closed form. 

\begin{proposition}\label{prop:I1}
The first two terms in the small $\rho$ expansion of the rate function are
\begin{equation}
I(K,S_0) = I_0(K, S_0) + \rho I_1(K, S_0) + O(\rho^2)\,,
\end{equation}
with
\begin{equation}\label{I1sol}
I_0(K,S_0) = \frac12 \left( \int_{S_0}^K \frac{du}{u \sigma(u) }\right)^2\,,
\qquad\text{and}\qquad
I_1(K, S_0) = - \int_{S_0}^K  \frac{du}{u \sigma^2(u) }\,. 
\end{equation}
\end{proposition}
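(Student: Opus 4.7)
The plan is to expand both the implicit equation (\ref{eqC1}) that determines the constant $C$ and the integral formula (\ref{Isol1}) for $I(K,S_0)$ in powers of $\rho$. Assume $K>S_0$ (the case $K<S_0$ is handled identically from (\ref{eqC2}) and (\ref{Isol2})): for $|\rho|$ sufficiently small the strike lies strictly in region~1, so Proposition~\ref{prop:I}(i) applies in a neighborhood of $\rho=0$.

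First I determine $C(\rho) = C_0 + \rho C_1 + O(\rho^2)$. Setting $\rho=0$ in (\ref{eqC1}) and substituting $u = S_0 e^g$ gives $\sqrt{C_0} = \int_{S_0}^K \frac{du}{u\sigma(u)}$. Differentiating (\ref{eqC1}) once in $\rho$ and evaluating at $\rho=0$ yields $-C_1/(2C_0)=0$, so $C_1=0$. This cancellation is the central observation; it forces $A(\rho) := \sqrt{C(\rho)\sigma^2+\rho^2} = \sqrt{C_0}\,\sigma + O(\rho^2)$. Consequently the integrand of (\ref{Isol1}), rewritten as $J = A/\sigma^2 - 2\rho/\sigma^2 + \rho^2/(\sigma^2 A)$, reduces through order $\rho$ to $\sqrt{C_0}/\sigma - 2\rho/\sigma^2$.

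Integrating the $O(\rho^0)$ term over $g\in[0,\log(K/S_0)]$ and using the defining relation for $\sqrt{C_0}$ gives $I_0 = C_0/2 = \tfrac{1}{2}\left(\int_{S_0}^K du/(u\sigma(u))\right)^2$, reproducing the stated $I_0$ (and matching (\ref{Irho0})). Integrating the $O(\rho)$ term and applying $u = S_0 e^g$ gives $I_1 = -\int_0^{\log(K/S_0)} dg/\sigma^2(S_0 e^g) = -\int_{S_0}^K du/(u\sigma^2(u))$, as claimed. The only subtle point is the differentiability of $C(\rho)$ at $\rho=0$, which follows from the implicit function theorem applied to (\ref{eqC1}) since the derivative with respect to $C$ at $(C_0,0)$ is a nonzero multiple of $\int_0^{\log(K/S_0)}dg/\sigma$ under Assumption~\ref{assump:main}. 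A slicker alternative, which bypasses the computation of $C_1$ entirely, is to apply the envelope theorem directly to the variational problem (\ref{Isol}): $dI/d\rho|_{\rho=0}$ equals $-\int_0^1 g_0'(t)/\sigma^2(S_0 e^{g_0(t)})\,dt$, and the change of variable $u=g_0(t)$ along the BBF path turns this into the same answer in one line.
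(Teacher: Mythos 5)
Your main argument follows the same route as the paper's: both expand the quadrature formula from Proposition~\ref{prop:I} in powers of $\rho$ and both rely on the vanishing of the $O(\rho)$ term in $C(\rho)$. You make explicit what the paper only asserts, namely $C_1=0$, by differentiating (\ref{eqC1}) at $\rho=0$, which is a worthwhile addition since the paper's cancellation at $O(\rho)$ silently depends on it. The one bookkeeping difference is that the paper avoids expanding $\sqrt{C\sigma^2+\rho^2}$ by writing the leading piece of the integrand as $\tfrac{C}{2}\int_{S_0}^K \frac{du}{u\sqrt{C\sigma^2(u)+\rho^2}}$ and noting this equals $\tfrac{C}{2}$ \emph{exactly} by (\ref{eqC1}), whereas you expand $A(\rho)=\sqrt{C(\rho)\sigma^2+\rho^2}=\sqrt{C_0}\,\sigma+O(\rho^2)$ and integrate term by term; both are correct. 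Your envelope-theorem aside is a genuinely different and slicker route: differentiating the value of the variational problem (\ref{Isol}) in $\rho$ at the optimizer annihilates the variation of the optimal path, giving $\frac{dI}{d\rho}\big|_{\rho=0}=-\int_0^1 g_0'(t)\,\sigma^{-2}(S_0 e^{g_0(t)})\,dt$ at once, and since the $\rho=0$ optimizer $g_0$ is monotone the substitution $u=g_0(t)$ turns this into $-\int_{S_0}^K \frac{dv}{v\sigma^2(v)}$ in one line, bypassing both the computation of $C_1$ and the expansion of the integrand. The cost, which you correctly flag, is the need to justify differentiability of the optimizer in $\rho$; the paper's direct expansion avoids that regularity discussion.
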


\begin{proof}
Assume $\rho>0$ and $K>S_0 e^{|\rho|}$. 
Thus we use the result (\ref{Isol1}) for the rate function in Proposition \ref{prop:I} for $K$ in region 1. The result is the same for $\rho<0$ and for $K$ in all other regions. 

First expand the coefficient $C$ in powers of $\rho$, using (\ref{eqC1}).
The leading order term is
\begin{equation}\label{C0}
C=\left(\int_{S_{0}}^{K}\frac{dx}{x\sigma(x)}\right)^{2}+O(\rho^{2}).
\end{equation}

The rate function is expanded in $\rho$ as
\begin{align}
I(K,S_0) &= \frac{1}{2}\int_{S_{0}}^{K}\frac{C\sigma^{2}(x)+2\rho^{2}-2\rho\sqrt{C\sigma^{2}(x)+\rho^{2}}}{x\sigma^{2}(x)\sqrt{C\sigma^{2}(x)+\rho^{2}}}dx
\nonumber
\\
&=\frac{C}{2}\int_{S_{0}}^{K}\frac{dx}{x\sqrt{C\sigma^{2}(x)+\rho^{2}}}
-\rho\int_{S_{0}}^{K}\frac{dx}{x\sigma^{2}(x)}+O(\rho^{2})
\nonumber
\\
&=\frac{C}{2}-\rho\int_{S_{0}}^{K}\frac{dx}{x\sigma^{2}(x)}+O(\rho^{2})
\nonumber
\\
&=\frac{1}{2}\left(\int_{S_{0}}^{K}\frac{dx}{x\sigma(x)}\right)^{2}-\rho\int_{S_{0}}^{K}\frac{dx}{x\sigma^{2}(x)}+O(\rho^{2}),
\end{align}
where we used (\ref{C0}) in the last line. 
This completes the proof.
\end{proof}

This result is equivalent with a prediction for the $O((r-q)T)$ correction to the asymptotic implied volatility $\sigma_{BS}(K,S_0,T)$.
The $O(T)$ correction to the implied volatility in the local volatility model was computed by Henry-Labord\`ere \cite{HLbook}
and Gatheral et al. \cite{Gatheral}. 
Assuming an expansion of the form $\sigma_{BS}(K,S_0,T) = \sigma_0(K,S_0) + \sigma_1 (K, S_0) T + O(T^2)$, 
they find the following result for the $O((r-q)T)$ term 
(see equation (2.7) in \cite{Gatheral})
\begin{equation}\label{sig1G}
\sigma_1(K,S_0) = \frac{\sigma_0^3}{\log^2 \frac{K}{S_0} }
\left\{ (\cdots ) + (r-q) \int_{S_0}^K \left( \frac{1}{\sigma^2(u)} - \frac{1}{\sigma_0^2(K,S_0)} \right) du \right\}\,,
\end{equation}
where the ellipses denote terms independent of $r-q$.

We will show that the result (\ref{I1sol}) reproduces the correction term in (\ref{sig1G})
by expanding the asymptotic implied volatility $\sigma_{\mathrm{BBF},\rho}(K,S_0)$ defined in (\ref{sigBBFrho}) in powers of $\rho$
\begin{equation}
\sigma_{\mathrm{BBF},\rho}^2(K,S_0) = \frac{(k-\rho)^2}{2 I(K,S_0)} = \frac{k^2}{2I_0(K,S_0)} + \rho\left( - \frac{k}{I_0(K,S_0)} 
- \frac{k^2}{2I_0^2(K,S_0)} I_1(K,S_0) \right) + O(\rho^2)\,.
\end{equation}
We denoted here $k=\log(K/S_0)$ the log-strike, which is 
related to the log-moneyness $x$ as $x=k-\rho$.
Comparing with the expansion $\sigma_{BS}(K,S_0,T) = \sigma_0(K,S_0) + \sigma_1 (K, S_0) T + O(T^2)$,
this gives for the coefficient of $\rho$ in the $O(T)$ term 
\begin{equation}
\sigma_1(K,S_0)[\rho] = - \frac{1}{2\sigma_0} \left( \frac{k^2}{2I_0^2(K,S_0)} I_1(K,S_0) + \frac{k}{I_0(K,S_0)} \right)\,.
\end{equation}
Using $\frac{1}{I_0(K,S_0)} = \frac{2\sigma_0^2(K,S_0)}{k^2}$ and substituting the result 
for $I_1(K,S_0)$ from (\ref{I1sol}) gives
\begin{equation}
\sigma_1(K,S_0)[\rho] = - \frac{\sigma_0^3}{k^2} \left(  I_1(K,S_0) + \frac{k}{\sigma_0^2(K,S_0} \right)
= \frac{\sigma_0^3}{\log^2 \frac{K}{S_0} }
\left( \int_{S_0}^K \frac{du}{u \sigma^2(u) } - \frac{\log(K/S_0)}{\sigma_0^2(K,S_0)}
\right)\,,
\end{equation}
which is seen to coincide precisely with the coefficient of $r-q$ in (\ref{sig1G}).

Finally, we make a remark that the asymptotic result $\sigma_{\mathrm{BBF},\rho}(K,S_0)$ derived in this paper includes terms of order $O((r-q)T)^n)$ to all orders in $n$.


\section{Application: the CEV model}\label{sec:CEV}

\subsection{The CEV Model}
In this section, we consider the application of the asymptotic method to the CEV model 
\begin{equation}\label{CEVsde}
dS_t = \sigma S_t^\alpha dW_t + (r-q) S_t dt\,, 
\end{equation}
which corresponds to $\sigma(S) = \sigma S^{\beta}$ with $\beta=\alpha-1$. 
This model was first introduced by Cox \cite{Cox1975}. For a short survey
we refer to Linetsky and Mendoza \cite{Linetsky2010}. Closed form
option prices for this model have been obtained by Schroeder \cite{Schroeder}. 
This CEV model has leverage effect for $\beta<0$, which is the property that the
stock price volatility increases as the stock price decreases. For this reason we
will consider only the range $-\frac12\leq \beta < 0$.

The short-maturity limit $T\to 0$ of the implied volatility in this model is obtained from the BBF formula (\ref{sigBBF}) which gives
\begin{equation}\label{BBF:CEV}
\sigma_{\mathrm{BBF}}(K,S_0) = \sigma |\beta | \frac{\log(K/S_0)}{K^{-\beta} - S_0^{-\beta}}\,.
\end{equation}
This result does not depend on interest rates effects, which is a generic result for the leading order under the usual $T\to 0$ asymptotics. These effects appear first at $O(T)$.

In this section we present the asymptotic implied volatility $\sigma_{\mathrm{BBF},\rho}(K,S_0,\rho)$ for the CEV model under the modified short-maturity limit considered here $T\to 0$ at fixed $\rho=(r-q)T$. This is expressed as in (\ref{sigBBFrho}) in terms of a rate function $I(K,S_0)$ which is given for a general volatility function $\sigma(\cdot )$ in Theorem~\ref{thm:LD}. We evaluate this rate function explicitly for the CEV model in Proposition~\ref{prop:CEV} below.

First we need to address a technical point.
Our main result Theorem~\ref{thm:LD} (and hence the subsequent discussions in Section~\ref{sec:variational}) requires Assumption~\ref{assump:main}, which does 
not hold for the CEV model $\sigma(x) = \sigma x^{\alpha-1}$. However, we can extend Theorem~\ref{thm:LD} to cover the CEV model as follows. In the proof of Theorem~\ref{thm:LD}, Assumption~\ref{assump:main} 
is used for the short-maturity large deviations for diffusion processes and to check the Novikov condition.

For the large deviations for diffusion processes, a large deviations property for the square-root
process $\alpha=\frac12 $ was proved by Donati-Martin \textit{et al} in \cite{23}, 
which was generalized by Baldi and Caramelino \cite{Baldi} to a wider class of models,
including the CEV model with $1/2 \leq \alpha < 1$. We can use \cite{Baldi}
to replace the reference \cite{Varadhan} in the proof of Theorem~\ref{thm:LD}.
For the Novikov condition, a separate argument is required when the volatility function can vanish as in the CEV model. 
In Appendix \ref{app:1} we provide such a proof for the CEV model, and show that for sufficiently small $T$, the Novikov condition is satisfied.

\textbf{Notation.} 
In order to simplify the notation, in this section we denote $\rho \to \theta \rho$
with $|\rho | \to \rho >0$ the absolute value of the $\rho$ parameter
and $\theta = \sgn(\rho) = \pm 1$ the sign of this parameter. 
Many results depend only on the absolute value of $\rho$, and using $|\rho|$ would make the notation unnecessarily heavy.

\begin{proposition}\label{prop:CEV}
Assume that the asset price $S_t$ follows the CEV model (\ref{CEVsde}).

i) For $K > S_0 e^{\rho }$ (region 1) and $K < S_0 e^{-\rho }$ (region 2), the 
rate function is
\begin{equation}\label{ICEVsol1}
I(K,S_0) = 
\frac{S_0^{2|\beta|}}{|\beta | \sigma^2} \cdot 
\left( e^{|\beta| x} - 1\right)^2
\cdot
\begin{cases}
\frac{\rho}{1 - e^{-2\rho |\beta|}}\,,  & \theta = +1, \\ 
\frac{\rho}{e^{2\rho |\beta|} - 1}\,,   & \theta = -1, 
\end{cases}
\end{equation}
with $x := \log\frac{K}{S_0} - \rho \theta$ being the log-moneyness.

ii) For $S_0 e^{-\rho } < K < S_0 e^{\rho }$ (region 3), the rate function is
\begin{equation}\label{ICEVsol2}
I(K,S_0) = \frac{S_0^{2|\beta|}}{4 |\beta | \sigma^2 } \rho \left( 1- \frac{y_0^2}{\rho^2} \right) e^{-2 \arctanh ( y_0/\rho) }
\cdot
\begin{cases}
(1 - e^{-2|\beta | \rho})\,,  &  \theta = + 1, \\
(e^{2|\beta | \rho} - 1 )\,, &  \theta =  - 1, 
\end{cases}
\end{equation}
where 
\begin{equation}\label{y0sol2}
y_0 := \rho \frac{e^{|\beta | \log(K/S_0)} - \cosh( |\beta | \rho)}{\sinh ( |\beta | \rho) }\,.
\end{equation}
\end{proposition}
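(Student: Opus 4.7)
The plan is to evaluate explicitly each of the quadratures in Proposition~\ref{prop:I} for $\sigma(S) = \sigma S^{-|\beta|}$. With the abbreviation $\Sigma(g) := \sigma^2(S_0 e^g) = \sigma^2 S_0^{-2|\beta|} e^{-2|\beta| g}$, the conservation law (\ref{Cdef}) along an optimal path gives $(g'(t))^2 - \rho^2 = C\,\Sigma(g(t))$. Differentiating and using $\Sigma'(g) = -2|\beta|\Sigma(g)$ produces the key identity
\begin{equation*}
dg \;=\; -\frac{y\, dy}{|\beta|(y^2 - \rho^2)},\qquad y := g'(t),
\end{equation*}
which converts every integrand in Proposition~\ref{prop:I} into an elementary rational function of $y$ alone, since $\Sigma$ is eliminated via $\Sigma = (y^2 - \rho^2)/C$.

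\textbf{Regions 1 and 2.} In region~1 one has $y \geq \rho > 0$ throughout. Set $y_j = \rho\coth\alpha_j$ with $\alpha_j > 0$, using $y_0 := g'(0)$ and $y_1 := g'(1)$. The above substitution collapses the defining equation (\ref{eqC1}) to $|\beta|\rho = \alpha_1 - \alpha_0$, while the algebraic constraint $(y_1^2 - \rho^2)/(y_0^2 - \rho^2) = e^{-2|\beta| k}$ reads $\sinh\alpha_1 = e^{|\beta| k}\sinh\alpha_0$. Solving the two determines $\coth\alpha_0 = (e^{|\beta| k} - \cosh(|\beta|\rho))/\sinh(|\beta|\rho)$ and hence $Q := e^{2\alpha_0} = (\rho + y_0)/(\rho - y_0)$ in closed form. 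The integrand of (\ref{Isol1}) simplifies under the change of variables to $-\tfrac{1}{2|\beta|}\, dy/(y + \theta\rho)^2$, an elementary quadrature with antiderivative $\tfrac{1}{2|\beta|(y + \theta\rho)}$; evaluating at $y_0, y_1$ and rewriting with the hyperbolic identities produces (\ref{ICEVsol1}). Region~2 is handled by the reflection $g \to -g$, which gives the same formula with the appropriate sign of $\theta$.

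\textbf{Region 3, turning-point sub-case.} Because $\sigma'(S) < 0$, the Euler--Lagrange equation (\ref{ELg}) reduces to $g'' = -|\beta|((g')^2 - \rho^2)$, which is strictly positive in region~3; the only sign-changing sub-case is therefore $g'(0) < 0 < g'(1)$. Here $|y| < \rho$, so I use $y_j = \rho\tanh\alpha_j$. After splitting the integrals at the turning point $t_*$ where $y = 0$, equation (\ref{eqC3}) becomes $|\beta|\rho = \arctanh(y_1/\rho) - \arctanh(y_0/\rho) = \alpha_1 - \alpha_0$, and the conserved quantity yields $\cosh\alpha_1 = e^{|\beta| k}\cosh\alpha_0$; eliminating $\alpha_1$ gives (\ref{y0sol2}). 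Each of the two rate-function subintegrals collapses to $\int dy/(y \pm \theta\rho)^2$ by the same mechanism as above; summing them and applying the identity $\rho(1 - y_0^2/\rho^2)\, e^{-2\theta\arctanh(y_0/\rho)} = (\rho - \theta y_0)^2/\rho$ produces (\ref{ICEVsol2}). The monotone sub-case of region~3 is a restriction of the region~1 or region~2 analysis and joins the turning-point formula by continuity at the boundaries $y_0 = \pm\rho$, where the ATM prediction $I = 0$ is correctly recovered.

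\textbf{Main difficulty.} The challenge is combinatorial rather than analytical: there are two signs $\theta = \pm 1$, three regions, and a turning-point sub-case, and in each combination the integrand in (\ref{Isol1})--(\ref{Isol2}) picks up a different sign in $(y \pm \theta\rho)^2$ while the algebraic constraints remain $\theta$-independent. Systematic use of the hyperbolic parameterization---$y = \rho\coth\alpha$ when $|y| > \rho$ and $y = \rho\tanh\alpha$ when $|y| < \rho$---organizes the algebra, so every integral reduces to the elementary quadrature of $(y \pm \rho)^{-2}$ and no individual step is computationally hard.
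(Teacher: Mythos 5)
Your proposal is correct and takes a genuinely different route from the paper. The paper (Appendix~C) first solves the Euler--Lagrange ODE $y'=|\beta|(\rho^{2}-y^{2})$ for $y=g'$ explicitly, integrates once more to get $g(t)$ in closed form (Proposition~\ref{prop:CEV.EL}), and then evaluates the rate function as a time integral $\tfrac12\int_0^1 (g'(t)-\rho\theta)^2/\sigma^2(S_0 e^{g(t)})\,dt$ by substituting those closed forms. You instead work entirely from the spatial quadratures of Proposition~\ref{prop:I}: the conservation law converts $\Sigma=(y^{2}-\rho^{2})/C$ and $dg=-y\,dy/\bigl(|\beta|(y^{2}-\rho^{2})\bigr)$, so the measure $dg$ is traded for $dy$ and every integrand collapses to $(y+\theta\rho)^{-2}$. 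This bypasses the explicit time parameterization of the optimal path and makes the boundary conditions $\alpha_{1}-\alpha_{0}=|\beta|\rho$, $\sinh\alpha_{1}=e^{|\beta|k}\sinh\alpha_{0}$ (resp.\ $\cosh$) immediate under the substitutions $y=\rho\coth\alpha$ and $y=\rho\tanh\alpha$. What each approach buys: the paper's route produces the explicit optimal paths, which it also needs for Figure~\ref{Fig:3colors}, whereas yours is leaner if the only goal is the rate function. Two small nits. First, after the change of variables the integrand in (\ref{Isol1}) is $-\dfrac{C}{2|\beta|}\dfrac{dy}{(y+\theta\rho)^{2}}$, not $-\dfrac{1}{2|\beta|}\dfrac{dy}{(y+\theta\rho)^{2}}$; the factor of $C=(y_0^{2}-\rho^{2})S_0^{2|\beta|}/\sigma^{2}$ is exactly what supplies the prefactor $S_0^{2|\beta|}/\sigma^{2}$ in (\ref{ICEVsol1})--(\ref{ICEVsol2}), so it must be carried through. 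Second, ``region~2 by reflection $g\to -g$'' is loose: the CEV local volatility $\sigma S^{-|\beta|}$ is not invariant under $g\to -g$, and what actually happens is that the same $y$-substitution applied to (\ref{Isol2}) with $y<-\rho$ reproduces the identical elementary quadrature. Neither issue affects the validity of the derivation.
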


\begin{proof}
The proof is given in Appendix \ref{app:2}.
\end{proof}

\begin{remark}
Taking the $\rho\to 0$ limit, the rate function in Proposition \ref{prop:CEV} becomes
\begin{equation}
\lim_{\rho\to 0} I(K,S_0) = \frac{1}{2\beta^2 \sigma^2}
\Big( K^{|\beta |} - S_0^{|\beta |} \Big) \,.
\end{equation}
Substituting into (\ref{sigBBFrho}) this is seen to reproduce the BBF 
result for the CEV model (\ref{BBF:CEV}) under the usual $T\to 0$ limit.
\end{remark}

\begin{remark}
The asymptotic implied volatility at the ATM point is
\begin{equation}\label{sigATMrho}
\sigma_{\mathrm{BBF},\rho}^2(K=S_0 e^\rho,S_0) = \frac{\sigma^2}{S_0^{2|\beta |} } \cdot \frac{1 - e^{-2\rho |\beta |} }{2\rho | \beta |} \,.
\end{equation}
This follows from the general result (\ref{sigRhoATM}) using the volatility function
$\sigma(S) = \sigma S^{\beta}$. 

The asymptotic ATM normalized skew is obtained from (\ref{sATM}) with the result
\begin{equation}
\frac{1}{\sigma_{\mathrm{BBF},\rho}(K=S_0 e^\rho,S_0) }
\frac{d}{dx} \sigma_{\mathrm{BBF},\rho}(K=S_0 e^\rho,S_0) = - \frac12 |\beta |\,.
\end{equation}
The dependence on $\rho$ cancels out in the ratio between the ATM skew and the
ATM implied volatility. 

These results can be verified by expanding the closed form result for the rate function in powers of $x$. The ATM volatility is related to the coefficient of the $O(x^2)$ term, and the ATM skew is related to the coefficient of the $O(x^3)$ term. 
\end{remark}

\subsection{Numerical tests}

Analytical results for option prices in the CEV model are available from Schroeder (1989) \cite{Schroeder}. We will use them to test the numerical efficiency of the new asymptotic limit considered here, and compare it with the simple $T\to 0$ asymptotic limit.

For the numerical tests we take $\beta=-\frac12$ (square root model) and 
$S_0=2, \sigma=0.14$, similar to the first scenario of Dassios and Nagardjasarma \cite{Dassios}. This corresponds to ATM implied volatility close to 
$\frac{\sigma}{\sqrt{S_0}} = 0.1$. 

\begin{figure}[h]
\centering
\includegraphics[width=6cm]{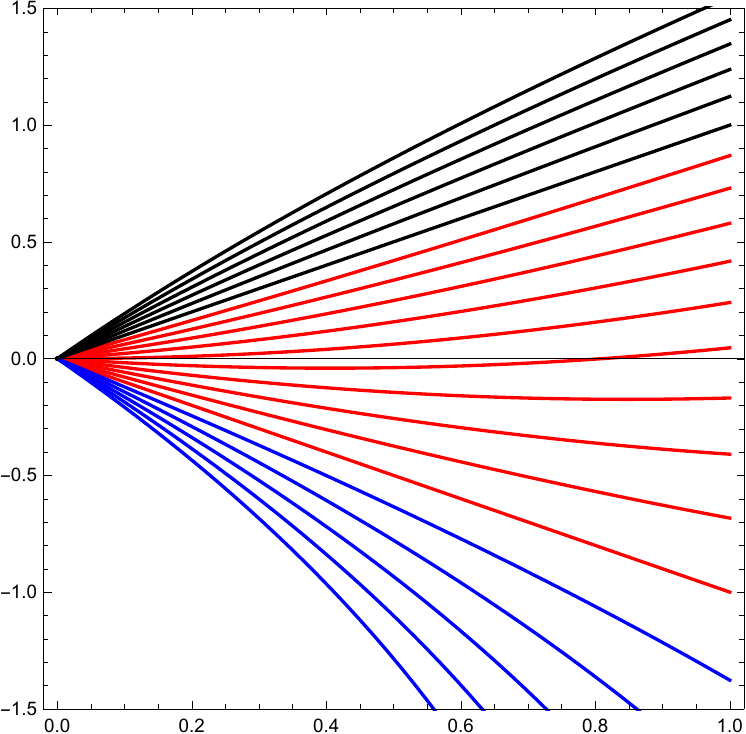}
\caption{Optimal paths $g(t)$ for the CEV model with $\beta=-\frac12$ (square-root model), for $\rho=1.0$. The paths correspond to $\log(K/S_0)$ taking values in
$\{-1.5,-1.1\}$ (blue), $\{-1,0,+1.0\}$ (red) and $\{+1.1,+1.5\}$ (black).
Each path is contained in one of the three regions in Figure~\ref{Fig:1}.}
\label{Fig:3colors}
\end{figure}

Figure~\ref{Fig:3colors} shows the optimal path $g(t)$ determined from Proposition
\ref{prop:CEV.EL},
for values of $\log(K/S_0)$ $\{-1.5,-1.1\}$ (blue), $\{-1,0,+1.0\}$ (red) and $\{+1.1,+1.5\}$ (black), in steps of 0.1.
Each path is contained in one of the three regions in Figure~\ref{Fig:1}, in agreement with the path classification analysis in Section \ref{sec:3.1}.

Figure~\ref{Fig:2} shows the asymptotic implied volatility $\sigma_{\mathrm{BBF},\rho}(K,S_0;\rho)$ vs $x = \log\frac{K}{S_0 e^\rho}$
(solid curve), in units of $\sigma/\sqrt{S_0}$. 
This is compared with the simple BBF formula (dashed curve), and with exact numerical evaluation (dots) using the analytical
results from Schroeder \cite{Schroeder}. 
The four scenarios correspond to $r=0.1$, $q=0$, and $T=\{1, 2, 5, 10\}$. 
The contribution from region 3 $(S_0 e^{-\rho} < K < S_0 e^\rho)$ is shown in red.
The agreement of the improved asymptotic result with the exact evaluations is very good. 

\begin{figure}[h]
\centering
\includegraphics[width=8cm]{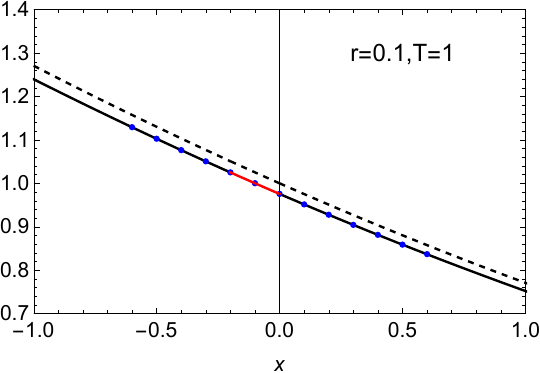}
\includegraphics[width=8cm]{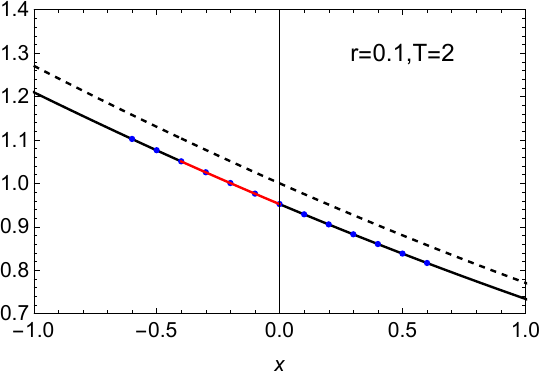}
\includegraphics[width=8cm]{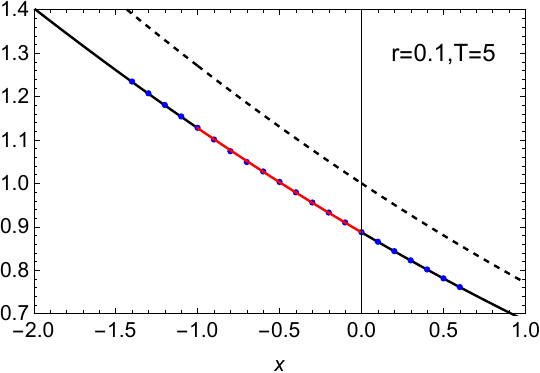}
\includegraphics[width=8cm]{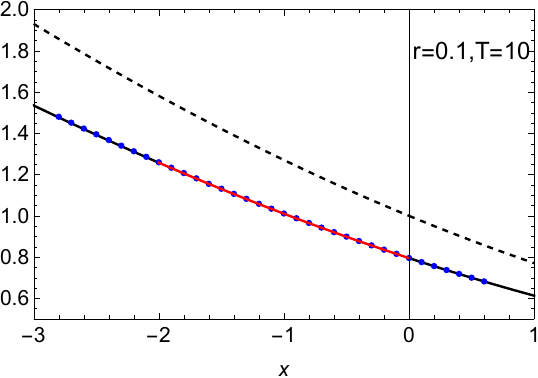}
\caption{Comparison of the improved asymptotic result $\sigma_{\mathrm{BBF},\rho}(K,S_0)$ (solid curve) 
(in units of $\frac{\sigma}{\sqrt{S_0}}$)
with the simple BBF formula $\sigma_{\mathrm{BBF}}(K,S_0)$ (dashed curve), and exact numerical evaluation (dots), for the 
$\beta=-\frac12$ model (square-root model). The contribution from region 3 is shown in red.}
\label{Fig:2}
\end{figure}

We study also the dependence on $\sigma$ by comparing the exact ATM implied volatility with the improved asymptotic result (\ref{sigATMrho}). The results are shown in Figure~\ref{Fig:3} for several choices of $r, T$. These plots show the normalized ATM implied volatility $\sigma_{ATM} := \frac{S_0}{\sigma} \sigma_{BS}(K=F(T),T)$. The improved asymptotic result becomes exact in the $\sigma \to 0$ limit. As $\sigma$ increases, the asymptotic result underestimates the exact result but the difference 
remains small for all $\sigma < 0.7$, which corresponds to ATM implied vols of about 50\%.

\begin{figure}[h]
\centering
\includegraphics[width=8cm]{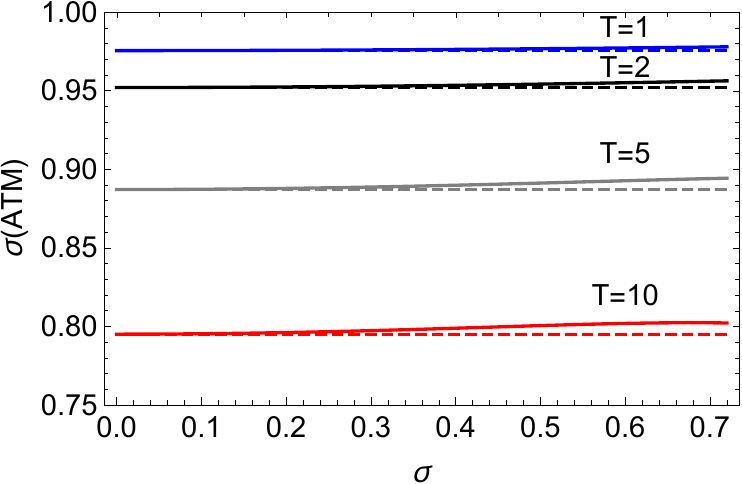}
\includegraphics[width=8cm]{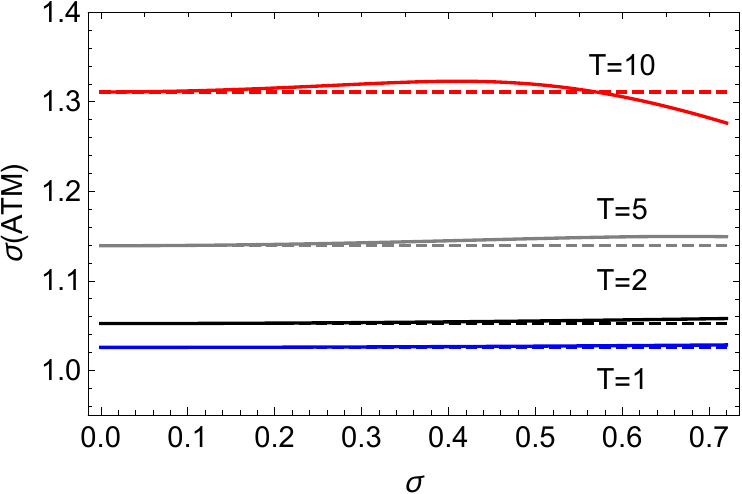}
\caption{The exact ATM implied volatility $\sigma_{BS}(K=S_0 e^{rT},T)$ (solid curves)
(in units of $\frac{\sigma}{\sqrt{S_0}}$), compared with the improved asymptotic
result $\sigma_{\mathrm{BBF},\rho}$ in (\ref{sigATMrho}) (dashed curves), vs $\sigma$. 
The simple BBF result is 1.
Parameters: $\beta=-\frac12$ model (square-root model), $r=0.1$ (left) and $r=-0.1$ (right).}
\label{Fig:3}
\end{figure}

\section*{Acknowledgements}
Lingjiong Zhu is partially supported by the grants NSF DMS-2053454, NSF DMS-2208303.

\appendix

\section{Background of Large Deviations Theory}\label{app:LD}

We give in this Appendix a few basic concepts of Large Deviations Theory which will be used in the proofs of this paper.
We refer to Dembo and Zeitouni \cite{Dembo1998} for more details on large deviations and its applications.

\begin{definition}[Large Deviation Principle]
A sequence $(P_\epsilon)_{\epsilon \in \mathbb{R}^+}$ of probability measures
on a topological space $X$ satisfies the large deviation principle with rate function $I: X \to \mathbb{R}$
if $I$ is non-negative, lower semicontinuous and for any measurable set $A$, we have
\begin{equation}
- \inf_{x\in A^o} I(x) \leq \liminf_{\epsilon\to 0} \epsilon \log P_\epsilon(A) \leq
\limsup_{\epsilon\to 0} \epsilon \log P_\epsilon(A) \leq - \inf_{x\in \bar A} I(x) \,.
\end{equation}
Here, $A^o$ is the interior of $A$ and $\bar A$ is its closure.
\end{definition}

In the proof of Theorem \ref{thm:LD} we use Varadhan's lemma. For the convenience of the
readers, we state the result as follows. 

\begin{lemma}[Varadhan's Lemma]
Suppose that $P_\epsilon$ satisfies a large deviation principle with rate function $I:X\to \mathbb{R}^+$
and let $F:X \to \mathbb{R}$ be a bounded and continuous function. Then 
\begin{equation}
\lim_{\epsilon\to 0} \epsilon \log \int_X e^{\frac{1}{\epsilon} F(x)} dP_\epsilon(x) = \sup_{x\in X} \{ F(x) - I(x) \}\,.
\end{equation}
\end{lemma}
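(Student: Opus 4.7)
The plan is to establish the stated equality by proving matching lower and upper bounds on $\liminf_{\epsilon \to 0} \epsilon \log \int_X e^{F(x)/\epsilon} dP_\epsilon$ and $\limsup_{\epsilon\to 0} \epsilon \log \int_X e^{F(x)/\epsilon} dP_\epsilon$, respectively. The lower bound is the easier of the two and uses only the lower-bound half of the LDP together with continuity of $F$; the upper bound requires the upper-bound half of the LDP together with a partition argument that exploits the boundedness of $F$.

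For the lower bound, I would fix any $x_0 \in X$ and any $\delta > 0$. Since $F$ is continuous, there is an open neighborhood $U$ of $x_0$ such that $F(y) \geq F(x_0) - \delta$ for all $y \in U$. Bounding the integral from below by the contribution from $U$ gives $\int_X e^{F/\epsilon} \, dP_\epsilon \geq e^{(F(x_0) - \delta)/\epsilon} P_\epsilon(U)$. Taking $\epsilon \log$, passing to $\liminf$, and invoking the LDP lower bound $\liminf_{\epsilon\to 0} \epsilon \log P_\epsilon(U) \geq -\inf_{y \in U} I(y) \geq -I(x_0)$ yields $\liminf \epsilon \log \int_X e^{F/\epsilon} \, dP_\epsilon \geq F(x_0) - \delta - I(x_0)$. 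Sending $\delta \downarrow 0$ and then taking the supremum over $x_0 \in X$ gives the desired lower bound $\sup_{x \in X}\{F(x) - I(x)\}$.

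For the upper bound, the key is to use boundedness $|F| \leq M$ to reduce to a finite partition. Given $\delta > 0$, pick $N \geq 2M/\delta$ and partition $[-M, M]$ into $N$ consecutive intervals of length at most $\delta$, with endpoints $a_0 < a_1 < \cdots < a_N$. Let $A_i := F^{-1}([a_{i-1}, a_i])$, which is closed by continuity of $F$. Then
\begin{equation}
\int_X e^{F/\epsilon} \, dP_\epsilon = \sum_{i=1}^N \int_{A_i} e^{F/\epsilon} \, dP_\epsilon \leq \sum_{i=1}^N e^{a_i/\epsilon} P_\epsilon(A_i).
\end{equation}
Applying the LDP upper bound on each closed set $A_i$ and using the principle-of-the-largest-term (valid because $N$ is fixed independent of $\epsilon$), I obtain
\begin{equation}
\limsup_{\epsilon\to 0} \epsilon \log \int_X e^{F/\epsilon}\, dP_\epsilon \leq \max_{1 \leq i \leq N}\Bigl\{ a_i - \inf_{x \in A_i} I(x) \Bigr\} \leq \max_{1 \leq i \leq N} \sup_{x \in A_i}\{ F(x) + \delta - I(x)\} \leq \sup_{x \in X}\{F(x) - I(x)\} + \delta,
\end{equation}
where the middle inequality uses that $a_i \leq F(x) + \delta$ for every $x \in A_i$. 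Letting $\delta \downarrow 0$ finishes the upper bound and matches the lower bound.

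The main obstacle is the upper bound: the LDP only provides an upper bound on closed sets, so the discretization must be arranged so that the sets on which $e^{F/\epsilon}$ is being bounded by its maximum are closed. Continuity of $F$ ensures this for the level sets $F^{-1}([a_{i-1}, a_i])$, and the hypothesis that $F$ is bounded is exactly what guarantees that only finitely many such pieces are needed; without boundedness one would instead need a tail moment condition of the form $\lim_{L \to \infty} \limsup_{\epsilon \to 0} \epsilon \log \int_{\{F \geq L\}} e^{F/\epsilon} dP_\epsilon = -\infty$ to discard the tails before applying the finite partition argument.
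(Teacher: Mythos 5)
Your proof is correct and is the standard argument for Varadhan's lemma (lower bound via LDP lower bound on an arbitrarily small open neighborhood plus continuity of $F$; upper bound via a finite partition of the range of $F$ into level sets, the LDP upper bound on each closed piece, and the principle of the largest term). Note, however, that the paper does not prove this lemma at all: it is stated in Appendix~\ref{app:LD} as background material with a citation to Dembo and Zeitouni, so there is no ``paper's proof'' to compare against. Two small points worth tightening if you were to include the argument: (i) in the upper bound, the inequality $a_i - \inf_{x\in A_i} I(x) \leq \sup_{x\in A_i}\{F(x)-I(x)\} + \delta$ needs the convention that empty $A_i$ contribute $-\infty$ (equivalently, are dropped from the maximum), which is consistent with $P_\epsilon(A_i)=0$; (ii) you should say explicitly that the principle of the largest term is being applied to a sum with a number of terms $N$ that is fixed before $\epsilon\to 0$, which you do mention parenthetically and which is exactly what the boundedness of $F$ secures.
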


\section{Novikov condition for the CEV model}
\label{app:1}

We give in this Appendix sufficient conditions for the finiteness of the
expectation appearing in the Novikov condition for the CEV model
\begin{equation}\label{CEV}
dS_t = \sigma S_t^{\beta+1} dW_t+ (r-q) S_t dt\,,
\end{equation}
with $-1/2 \leq \beta < 0$. For simplicity of notation we will assume $q=0$.

As in the proof of Theorem~\ref{thm:LD}, 
we aim to show that $\mathbb{E}\left[\exp\left(\frac{1}{2}\int_{0}^{T}\frac{r^{2}dt}{\sigma^2 S_{t}^{2\beta}} \right)\right] < \infty$
for sufficiently small $T>0$.
Denote the expectation to be studied as
\begin{equation}\label{Idef}
I_\beta(T) := \mathbb{E}\left[e^{\frac{r^{2}}{2\sigma^2} \int_0^T S_t^{2|\beta| } dt }\right] \,.
\end{equation}

We distinguish between the two cases $\beta = -\frac12$ and $\beta \in (-\frac12,0)$.
For both cases we prove that the expectation (\ref{Idef}) is finite, for sufficiently small $T$.

\begin{proposition}
Assume that $S_t$ follows the square root model 
$dS_t = \sigma \sqrt{S_t} dW_t + r S_t dt$. Then the expectation 
\begin{equation}\label{Isqrt}
I_{-1/2}(T) = \mathbb{E}\left[e^{\frac{r^{2}}{2\sigma^2} \int_0^T S_t dt }\right]
\end{equation}
is finite for $T< T_{exp}(r):=2/r$.
\end{proposition}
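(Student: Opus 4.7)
The plan is to exploit the affine structure of the square-root process by constructing an explicit exponential local martingale whose terminal value recovers $e^{\lambda \int_0^T S_s\,ds}$ with $\lambda := r^2/(2\sigma^2)$. I would propose the ansatz
$M_t := \exp\!\bigl( B(T-t)\,S_t + \lambda \int_0^t S_s\,ds \bigr)$,
where $B:[0,T]\to\mathbb{R}$ is a $C^1$ function satisfying $B(0)=0$, chosen so that $M_T$ equals precisely the random variable appearing in (\ref{Isqrt}).

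Applying It\^o's formula, the drift of $M_t$ equals $S_t M_t \bigl[\lambda - B'(T-t) + r B(T-t) + \tfrac{\sigma^2}{2} B(T-t)^2\bigr]$, so $M_t$ is a local martingale iff $B$ satisfies the Riccati ODE $B'(u) = \tfrac{\sigma^2}{2} B(u)^2 + r B(u) + \lambda$ with $B(0)=0$. The critical observation is that at $\lambda = r^2/(2\sigma^2)$ the discriminant $r^2 - 2\sigma^2\lambda$ vanishes, so the right-hand side is a perfect square $\tfrac{\sigma^2}{2}\bigl(B+r/\sigma^2\bigr)^2$. Substituting $v = B + r/\sigma^2$ collapses the ODE to $v' = \tfrac{\sigma^2}{2} v^2$ with $v(0) = r/\sigma^2$, which I would solve by separation of variables to obtain the closed form $B(u) = \frac{r^2 u/(2\sigma^2)}{1 - ru/2}$. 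This $B$ is non-negative and finite on $[0,T]$ for every $T < 2/r$ and blows up as $T \uparrow 2/r$, already singling out the explosion time $T_{\mathrm{exp}}(r) = 2/r$ claimed in the proposition.

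With $B$ well-defined on $[0,T]$, the process $M_t$ is a non-negative continuous local martingale, hence a supermartingale; Fatou's lemma applied along any localizing sequence gives $\mathbb{E}[M_T] \le \mathbb{E}[M_0] = e^{B(T) S_0}$, and since $B(0)=0$ forces $M_T = e^{\lambda \int_0^T S_s\,ds}$, I would conclude $I_{-1/2}(T) \le e^{B(T) S_0} < \infty$ whenever $T < 2/r$. The main subtlety to watch is the passage from local martingale to a genuine bound on $\mathbb{E}[M_T]$: attempting to verify a Novikov-type condition directly on $M$ would be circular, but Fatou only needs a one-sided estimate and so bypasses this obstacle cleanly.
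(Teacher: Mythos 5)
Your proof is correct, and it takes a genuinely different route from the paper. The paper's proof simply invokes the known closed-form Laplace transform of $\int_0^T S_t\,dt$ for the square-root process (citing Cox et al.\ and quoting Dufresne's formula), then substitutes $s=-r^2/(2\sigma^2)$ and takes the degenerate limit $P\to 0$ to read off finiteness for $T<2/r$. You instead re-derive the relevant half of that formula from scratch: you posit the affine exponential ansatz $M_t=\exp(B(T-t)S_t+\lambda\int_0^t S_s\,ds)$, reduce the local-martingale condition to the Riccati ODE $B'=\tfrac{\sigma^2}{2}B^2+rB+\lambda$, notice that at $\lambda=r^2/(2\sigma^2)$ the discriminant vanishes so the ODE collapses to $v'=\tfrac{\sigma^2}{2}v^2$, solve it explicitly, and then use the nonnegative-local-martingale-is-a-supermartingale fact plus Fatou to get $\mathbb{E}[M_T]\le e^{B(T)S_0}<\infty$ for $T<2/r$. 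What your approach buys is self-containedness and a clean resolution of the local-vs-true-martingale subtlety: you do not need to cite an external formula or prove that $M$ is a genuine martingale; the one-sided supermartingale bound suffices. What the paper's approach buys is that the citation directly yields the exact value $I_{-1/2}(T)=\exp\bigl(\tfrac{r^2 T}{2\sigma^2}S_0\tfrac{1}{1-rT/2}\bigr)$, which, as it happens, coincides exactly with your upper bound $e^{B(T)S_0}$ — so your supermartingale inequality is in fact tight here, though you (correctly) did not need to establish that. Both derivations are ultimately two faces of the same affine Riccati structure; yours makes that structure explicit.
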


\begin{proof}
In the square root model the expectation (\ref{Isqrt}) can be computed exactly, as shown by Cox et al. \cite{Cox1985}. We will use the form of the result 
quoted in equation (4.1) in Dufresne \cite{Dufresne2001}:
\begin{equation}\label{I:s:eqn}
I(s) := \mathbb{E}\left[ e^{- s \int_0^T S_t dt }\right] = \exp\left(
- s \frac{S_0}{P} \cdot \frac{2 \sinh(P T/2)}{\cosh(P T/2) - \frac{r}{P} \sinh(P T/2)}\right)\,,
\end{equation}
with $P=\sqrt{r^2 + 2\sigma^2 s}$. 
The expectation (\ref{Isqrt}) corresponds to 
$s=-\frac{r^{2}}{2\sigma^2}$, which yields $P=0$ so that
by taking the limit $P\rightarrow 0$ in \eqref{I:s:eqn} we obtain
\begin{equation}
I_{-1/2}(T) =I\left(-\frac{r^{2}}{2\sigma^{2}}\right)
= \mathbb{E}\left[ e^{ \frac{r^{2}}{2\sigma^2} \int_0^T S_t dt }\right] = 
\exp\left(\frac{r^{2}T}{2\sigma^{2}} S_0 \frac{1}{1-\frac{rT}{2}}\right),
\end{equation}
which is finite for any $T<2/r$. 
This completes the proof.
\end{proof}

A similar result holds in the more general CEV model 
with $-\frac12 < \beta < 0$.

\begin{proposition}
Assume that the asset price $S_t$ follows the CEV model 
$dS_t = \sigma S_t^{\beta+1} dW_t+ r S_t dt$
 with
$-\frac12 < \beta < 0$. 
Then the expectation (\ref{Idef}) is finite
$I_\beta(T) < \infty$ for 
$T < - \frac{1}{2|\beta| r}\log(1 - 2|\beta |\pi)$ if $|\beta |<\frac{1}{2\pi}$ and
for all $T>0$ if $\frac{1}{2\pi} \leq |\beta | < \frac{1}{2}$.
\end{proposition}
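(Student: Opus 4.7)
The plan is to reduce the integrability question to an exponential functional of a squared Bessel process via the classical CEV--Bessel correspondence. First I would apply It\^o's formula to $Y_t=S_t^{2|\beta|}$, obtaining a CIR-like SDE
\begin{equation*}
dY_t=2|\beta|\sigma\sqrt{Y_t}\,dW_t+\bigl[2|\beta|r Y_t-|\beta|(1-2|\beta|)\sigma^2\bigr]dt
\end{equation*}
with positive-feedback drift $2|\beta|rY\,dt$. After rescaling $\tilde Y_t:=Y_t/(|\beta|\sigma)^2$ and then performing a combined discounting and deterministic time change $Z_s:=e^{-2|\beta|rt}\tilde Y_t$ with $s:=(1-e^{-2|\beta|rt})/(2|\beta|r)$, a Dambis--Dubins--Schwarz argument identifies $Z$ as a \emph{standard} squared Bessel process of dimension $\delta:=2-1/|\beta|$ starting at $Z_0=S_0^{2|\beta|}/(|\beta|\sigma)^2$. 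Since $|\beta|\in(0,1/2)$, this dimension is negative, so $Z$ is absorbed at $0$ in finite time almost surely---this absorption will be the key mechanism producing the paper's threshold structure.

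Under this transformation the exponent of $I_\beta(T)$ rewrites as a weighted integral of $Z$,
\begin{equation*}
\frac{r^2}{2\sigma^2}\int_0^T S_t^{2|\beta|}\,dt=\frac{r^2|\beta|^2}{2}\int_0^U\frac{Z_s}{(1-2|\beta|rs)^2}\,ds,\qquad U:=\frac{1-e^{-2|\beta|rT}}{2|\beta|r},
\end{equation*}
so that the finiteness of $I_\beta(T)$ becomes equivalent to finiteness of an exponential functional of $Z$ on the finite interval $[0,U]$. After bounding the weight $(1-2|\beta|rs)^{-2}$ by its maximum on $[0,U]$, I would invoke the Cameron--Martin formula for squared Bessel processes, whose integrability range for negative dimension is controlled by the first positive zero of the Bessel function $J_\nu$ with index $\nu:=\delta/2-1=-1/(2|\beta|)$.

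The threshold $|\beta|=1/(2\pi)$ corresponds exactly to $|\nu|=\pi$. Above this threshold, the combination of absorption at $0$ and the magnitude of the Bessel index makes the exponential moment finite for every $U>0$, hence for every $T>0$, which gives the second case of the proposition. Below it, the finite-$U$ singularity of the Cameron--Martin formula, translated back through $U=(1-e^{-2|\beta|rT})/(2|\beta|r)$, yields the stated bound $T<-(2|\beta|r)^{-1}\log(1-2|\beta|\pi)$. The hard part will be the $\delta<0$ case of the Cameron--Martin analysis: one must carefully account for paths that hit $0$ before time $U$ and adapt the $\cos$-type singularity structure of the positive-dimensional formula to the negative-index regime relevant here.
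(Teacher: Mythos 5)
Your reduction to a squared Bessel process is essentially identical to the paper's. Applying It\^o to $S_t^{2|\beta|}$, rescaling, and performing the combined discount/time change $s=(1-e^{-2|\beta|rt})/(2|\beta|r)$ reproduces exactly the paper's two-step transformation (undrifted CEV via $S_t=x_te^{rt}$, then $z_\tau=x_\tau^{-2\beta}$): your $Z_s$ is a constant multiple of their $z_\tau$, and both are BESQ of dimension $\delta=2-1/|\beta|<0$. The rewriting of the exponent as $\tfrac{r^2|\beta|^2}{2}\int_0^U (1-2|\beta|rs)^{-2}Z_s\,ds$ and the bound of the weight by its maximum are also exactly what the paper does. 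So up to and including that point the two arguments agree.

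Where you diverge is the final integrability step, and there your plan has a genuine gap. You propose to invoke a Cameron--Martin formula for BESQ of negative dimension (with absorption at zero) and claim its integrability range is ``controlled by the first positive zero of the Bessel function $J_\nu$ with index $\nu=-1/(2|\beta|)$,'' identifying the threshold $|\beta|=1/(2\pi)$ with $|\nu|=\pi$. This misidentifies the mechanism. The Cameron--Martin Laplace transform of $\int_0^U Z_s\,ds$ for BESQ has the trigonometric form (after continuation to an exponential moment) $(\cos\mu U)^{-\delta/2}\exp\bigl(\tfrac{\mu Z_0}{2}\tan\mu U\bigr)$; the finiteness threshold comes from the singularity of $\tan$ at $\mu U=\pi/2$, not from a zero of $J_\nu$. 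Zeros of $J_\nu$ govern, e.g., hitting-time Laplace transforms for Bessel processes on an interval, a different functional. Moreover, for $\delta<0$ with forced absorption at the origin, the validity of the naive continuation of the Cameron--Martin identity is itself a nontrivial matter, which you flag as ``the hard part'' but do not resolve. The paper avoids the entire issue: it applies the Ikeda--Watanabe comparison theorem to dominate the drifted square-root process $z$ pathwise by the driftless one $y$ (BESQ of dimension $0$), and then uses the closed-form Dufresne formula for $\mathbb{E}[e^{-s\int_0^\tau y}]$, whose analytic continuation produces the $\tan(\sqrt{\gamma/2}\,\tau)$ blowup at $\tau=\pi/\sqrt{2\gamma}$ directly. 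That comparison step is the missing ingredient in your plan; without it, or without a rigorous negative-dimension Cameron--Martin statement with absorption, the proposal does not close.
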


\begin{proof}
It is well known that the process (\ref{CEV}) 
can be reduced to the square root model by a sequence of two transformations.

\textbf{Step 1.} 
Remove the drift in (\ref{CEV}) by the redefinition $S_t = x_t e^{rt}$
where $x_t$ follows the process
$dx_t = \sigma x_t^{\beta+1} e^{\beta rt} dW_t$.
The time dependent factor can be absorbed into a time redefinition as
$dx_\tau = \sigma x_\tau^{\beta+1} dW_\tau$,
with $\tau(t) = \frac{1}{2\beta r} (e^{2\beta rt} - 1 )$.

\textbf{Step 2.} Introduce $z_\tau := x_\tau^{-2\beta}$
which follows the square-root process with constant drift
\begin{equation}\label{zSDE}
dz_\tau = - 2\sigma \beta \sqrt{z_\tau} dW_\tau + \sigma^2 \beta (2\beta+1) d\tau\,.
\end{equation}

The integral in the exponent of (\ref{Idef}) becomes
\begin{equation}\label{eqn:integral}
\int_0^T S_t^{2|\beta |} dt = \int_0^T x_t^{2|\beta |} e^{2|\beta | rt} dt
= \int_0^T z_t e^{ 2 |\beta | r t } dt \,.
\end{equation}

For any $0< t\leq T$, we have $e^{ 2|\beta | r t} =\frac{1}{1 - 2 | \beta  | r \tau(t)} \leq \frac{1}{1 - 2 | \beta  | r \tau(T)}$ since $t \mapsto \tau(t)$ is a monotonically 
increasing function. Thus, the integral in \eqref{eqn:integral} is bounded from above as
\begin{equation}
\int_0^T z_t e^{2 |\beta | r t } dt \leq \frac{1}{(1 - 2 |\beta | r \tau(T))^2 } 
\int_0^{\tau(T)} z_s ds\,,\qquad\text{for any $T>0$.}
\end{equation}
Thus the expectation appearing in the Novikov condition is bounded as
\begin{equation}
\mathbb{E}\left[e^{\frac{r^{2}}{2\sigma^2} \int_0^T S_t^{2|\beta |} dt } \right]\leq
\mathbb{E}\left[e^{\frac{r^{2}}{2\sigma^2(1 - 2 |\beta | r \tau(T))^2 } \int_0^{\tau(T)} z_s ds }\right].
\end{equation}

The expectation giving the upper bound is obtained by 
replacing $\gamma \to \frac{r^{2}}{2(1-2 |\beta | r \tau(T) )^2}$ in 
Lemma~\ref{lemma:1}.
By Lemma~\ref{lemma:1}, this expectation is finite, for all $\tau(T)$ for which
$r\tau(T) < \pi(1 - 2|\beta | r\tau(T) ) < \pi$. This is equivalent to
$e^{-2|\beta | rT} > 1 - 2 |\beta | \pi$.
For $|\beta | \geq\frac{1}{2\pi}$ this holds for all $T>0$, while for 
$|\beta | <\frac{1}{2\pi}$ it holds for sufficiently small $T$.
This completes the proof.
\end{proof}

\begin{lemma}\label{lemma:1}
Suppose that $z_t$ is defined by the process
\begin{equation}\label{z:SDE}
dz_t = \sigma \sqrt{z_t} dW_t - a dt 
\end{equation}
with $a>0$ and initial condition $z_0>0$ up until the time $t_0=\inf \{ t\geq 0; z_t=0 \}$, and $z_t=0$ for all $t>t_0$.
Then we have
\begin{equation}\label{Jbound}
J(\gamma) := \mathbb{E}\left[e^{\frac{\gamma}{\sigma^2} \int_0^\tau z_t dt }\right] \leq \exp\left( \sqrt{\frac{\gamma}{2}} \frac{z_0}{\sigma^2}
\tan\left( \sqrt{\frac{\gamma}{2}} \tau \right)\right)\,,
\end{equation}
which is finite for all $\tau < \frac{\pi}{\sqrt{2\gamma}}$.
\end{lemma}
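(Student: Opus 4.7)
The plan is to compute the Laplace functional $J(\gamma)$ in closed form via Feynman--Kac and then obtain the stated inequality by discarding a manifestly non-positive term in the resulting exponent. Define
\begin{equation*}
u(t,z) := \mathbb{E}\!\left[\exp\!\left(\frac{\gamma}{\sigma^2}\int_t^\tau z_s\,ds\right)\,\bigg|\, z_t = z\right],
\end{equation*}
so that $J(\gamma)=u(0,z_0)$. Formally $u$ satisfies the backward Feynman--Kac PDE
\begin{equation*}
\partial_t u + \tfrac12 \sigma^2 z\, u_{zz} - a\, u_z + \tfrac{\gamma}{\sigma^2} z\, u = 0, \qquad u(\tau,z)=1,
\end{equation*}
with the behavior at $z=0$ inherited from absorption in the SDE.

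Since both the generator and the potential are affine in $z$, I would try the exponential-affine ansatz $u(t,z)=\exp(A(t)z + B(t))$. Matching powers of $z$ in the PDE reduces it to the scalar system
\begin{equation*}
A'(t) + \tfrac{\sigma^2}{2} A(t)^2 + \tfrac{\gamma}{\sigma^2} = 0, \qquad B'(t) = a\, A(t),
\end{equation*}
with $A(\tau)=B(\tau)=0$. The Riccati is solved in closed form by an ansatz $A(t)=\alpha\tan(\beta(\tau-t))$; matching fixes $\beta=\sqrt{\gamma/2}$ (and $\alpha$ proportional to $\sigma^{-2}$), and $A$ is well defined exactly while $\sqrt{\gamma/2}(\tau-t)<\pi/2$, which is the origin of the finiteness range $\tau<\pi/\sqrt{2\gamma}$. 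Integrating the linear equation gives $B(t)$ proportional to $\log\cos(\sqrt{\gamma/2}(\tau-t))\le 0$, and in particular $B(0)\le 0$.

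To turn the formal PDE argument into a rigorous identity, I would apply Ito's formula to
\begin{equation*}
M_t := \exp\!\left(A(t)z_t + B(t) + \frac{\gamma}{\sigma^2}\int_0^t z_s\,ds\right);
\end{equation*}
the ODEs above are exactly the conditions that make the finite-variation part of $dM_t$ vanish, so $M_t$ is a local martingale. Absorption at $z=0$ causes no difficulty because $z_s$ and its stochastic increment both vanish for $s\ge t_0$. The terminal conditions give $M_\tau=\exp\!\big(\tfrac{\gamma}{\sigma^2}\int_0^\tau z_s\,ds\big)$ and $M_0=\exp(A(0)z_0+B(0))$, so taking expectations yields the exact value of $J(\gamma)$; the stated bound follows by discarding the non-positive $B(0)$ contribution. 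The main obstacle will be the martingale step, since $M_t$ is a priori only a local martingale and might fail to be a true martingale near the blow-up of $\tan$: I would localize with stopping times $\tau_n\uparrow\tau$ and apply Fatou's lemma, which delivers $\mathbb{E}[M_\tau]\le M_0$ directly --- precisely the inequality the upper bound demands --- avoiding the need to verify a stronger uniform integrability estimate on the square-root process.
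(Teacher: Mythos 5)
Your plan is genuinely different from the paper's: the paper compares $z_t$ pathwise with the driftless process $dy_t=\sigma\sqrt{y_t}dW_t$ via the Ikeda--Watanabe comparison theorem, then quotes a closed-form Laplace transform for $\int_0^\tau y_s\,ds$ (Dufresne/Pitman--Yor type), whereas you solve the Riccati system directly for $z$. The Riccati setup is correct: the affine ansatz gives $A'+\tfrac{\sigma^2}{2}A^2+\tfrac{\gamma}{\sigma^2}=0$, $B'=aA$, with $A(t)=\tfrac{\sqrt{2\gamma}}{\sigma^2}\tan\bigl(\sqrt{\gamma/2}\,(\tau-t)\bigr)$ and $B(t)=\tfrac{2a}{\sigma^2}\log\cos\bigl(\sqrt{\gamma/2}\,(\tau-t)\bigr)\le 0$. (Incidentally, once you work out $\alpha$, your bound $\exp(A(0)z_0)=\exp\bigl(\tfrac{\sqrt{2\gamma}}{\sigma^2}z_0\tan(\sqrt{\gamma/2}\,\tau)\bigr)$ differs from the constant displayed in the lemma by a factor of $2$ in the exponent; setting $r=0$ in the paper's own equation \eqref{I:s:eqn} gives $2\tanh(PT/2)$ and supports your constant rather than the one copied into \eqref{I0s}, so this is almost certainly a transcription slip in the paper and not an error on your side.)

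The real gap is your assertion that ``absorption at $z=0$ causes no difficulty.'' It does. On $\{t>t_0\}$ the dynamics are $dz_t=0$, so the Ito drift of $M_t=\exp(A(t)z_t+B(t)+\tfrac{\gamma}{\sigma^2}\int_0^t z_s ds)$ is $M_t B'(t)\,dt=a A(t)M_t\,dt>0$; $M$ is only a local martingale up to $t_0$ and becomes a strict submartingale afterward. Equivalently, the affine $u(t,z)=e^{A(t)z+B(t)}$ solves the interior PDE but has $u(t,0)=e^{B(t)}<1$, which is not the correct boundary value for the absorbed process (which requires $u(t,0)=1$). So $\mathbb{E}[M_\tau]$ is neither equal to nor, by the argument you give, bounded by $M_0$; Fatou plus localization on the full interval $[0,\tau]$ gives an inequality in the wrong direction once the positive post-absorption drift is accounted for. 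This is exactly the subtlety the paper sidesteps by routing through the driftless $y$, for which $B\equiv 0$, the affine solution does satisfy $u(t,0)=1$, and the comparison theorem supplies the inequality $z_t\le y_t$ cleanly.

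Your conclusion can still be rescued, but by an argument you did not give. Note that for $t\ge t_0$ one has $M_t=M_{t\wedge t_0}\,e^{B(t)-B(t\wedge t_0)}$, so $M_\tau=M_{\tau\wedge t_0}\,e^{-B(\tau\wedge t_0)}$ (using $B(\tau)=0$). Since $B'=aA>0$, $B$ is increasing, hence $B(0)\le B(\tau\wedge t_0)\le 0$ and $e^{-B(\tau\wedge t_0)}\le e^{-B(0)}$. The stopped process $M_{\cdot\wedge t_0}$ \emph{is} a nonnegative local martingale, so Fatou gives $\mathbb{E}[M_{\tau\wedge t_0}]\le M_0=e^{A(0)z_0+B(0)}$, and therefore
\begin{equation*}
J(\gamma)=\mathbb{E}[M_\tau]\le e^{-B(0)}\,\mathbb{E}[M_{\tau\wedge t_0}]\le e^{-B(0)}\,e^{A(0)z_0+B(0)}=e^{A(0)z_0}.
\end{equation*}
So the ``discard $B(0)\le 0$'' step is needed not to simplify an exact identity but to absorb the post-$t_0$ submartingale drift --- the bound is correct, but the mechanism is not the one you described.
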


\begin{proof}
Denote $y_t$ the process defined by $dy_t = \sigma \sqrt{y_t} dW_t$ until the first time it hits zero, with absorbtion at origin,
and started at the same value as $z_t$, that is $y_0=z_0 > 0$.  

We would like to use the comparison theorem for solutions of one-dimensional SDEs 
(Theorem 1.1 in \cite{Ikeda1977}) to compare pathwise $z_t$ and $y_t$.
The comparison theorem assumes that the volatility function $\sigma(x)$ satisfies
$| \sigma(x) - \sigma(y) | \leq \rho( |x-y| ), x,y\in \mathbb{R}$, with $\rho(\xi)$ an increasing function on
$[0,\infty)$ such that $\rho(0)=0$ and $\int_0^\infty \rho(\xi)^{-2} d\xi=\infty$.
This condition is satisfied by $\sigma(x) = \sigma\sqrt{x}$ with $h(\xi) = \sqrt{\xi}$, as can be seen from the inequality
$|\sqrt{y} - \sqrt{x} | \leq \sqrt{y-x}$ which holds for any $0< x < y$. 

Application of the comparison theorem gives $z_t \leq y_t$ almost surely, which implies an inequality among the expectations
\begin{equation}\label{eqn:two:sides}
\mathbb{E}\left[e^{\frac{\gamma}{\sigma^2} \int_0^T z_t dt }\right] \leq
\mathbb{E}\left[e^{\frac{\gamma}{\sigma^2} \int_0^T y_t dt }\right] \,.
\end{equation}

The expectation on the right hand side of \eqref{eqn:two:sides} 
can be evaluated in closed form,  see e.g. equation (4.1) in Dufresne \cite{Dufresne2001} as
\begin{equation}\label{I0s}
\mathbb{E}\left[e^{-s\int_0^\tau y_s ds} \right] = \exp\left( -s \frac{z_0}{P} \tanh\left(\frac{P \tau}{2} \right)\right)\,,
\end{equation}
with $P=\sigma\sqrt{2s}$.
Taking here $s = - \frac{\gamma}{\sigma^2}$ gives $P=iQ$ with $Q=\sqrt{2\gamma}$. Substituting into (\ref{I0s}) gives the stated result (\ref{Jbound}).
This completes the proof.
\end{proof}

\section{Proof of Proposition~\ref{prop:CEV}}
\label{app:2}

First we give an explicit result for the optimal paths for the CEV model. 
They are obtained by solving the
Euler-Lagrange equation for the variational problem of Theorem~\ref{thm:LD}.
Specializing (\ref{ELg}) to the CEV model, this equation becomes
\begin{equation}\label{ELCEV}
g''(t) = S_0 e^{g(t)} \cdot \frac{\sigma'(S_0 e^{g(t)} )}{\sigma(S_0 e^{g(t)} )}
\cdot [(g'(t))^2 - \rho^2] =  |\beta | \left[\rho^2 - (g'(t))^2\right] \,.
\end{equation}

\begin{proposition}\label{prop:CEV.EL}
The solutions of the Euler-Lagrange equation for the CEV model
are different in the two regions:

i) $K \geq S_0 e^{\rho}$ (region 1) and 
   $0<K\leq S_0 e^{-\rho }$ (region 2).
In this case, the solution is
\begin{equation}\label{gsol1}
g(t) = \frac{1}{|\beta |} \log\left( y_0 + \rho - (y_0 - \rho) e^{- 2 |\beta | \rho t} \right) 
- \frac{1}{|\beta |} \log(2\rho  ) + \rho  t \,,
\end{equation}
where
\begin{equation}\label{y0sol1}
y_0 =  \rho \frac{2e^{|\beta | (\log(K/S_0)- \rho ) } - (1 + e^{-2|\beta| \rho } )}
{1 - e^{-2 \rho  |\beta |}} \,.
\end{equation}

ii)  $S_0 e^{- \rho }< K < S_0 e^{ \rho }$ (region 3).  
In this case, the solution is 
\begin{equation}\label{gsol2}
g(t) = \frac{1}{|\beta |} \log\left\{ \sqrt{1-(y_0/\rho)^2} \cosh
\left( |\beta | \rho t + \arctanh (y_0/\rho) \right) \right\} \,,
\end{equation}
with 
\begin{equation}\label{y0sol2}
y_0 = \rho \frac{e^{|\beta | \log(K/S_0)} - \cosh( |\beta | \rho)}{\sinh ( |\beta | \rho) }\,.
\end{equation}
\end{proposition}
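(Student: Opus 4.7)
The strategy is to exploit the fact that the Euler--Lagrange ODE (\ref{ELg}) specialised to the CEV model is autonomous, so after setting $y(t):=g'(t)$ it collapses to a first-order separable ODE that can be integrated in closed form. The two constants of integration are then pinned down by the boundary conditions $g(0)=0$ and $g(1)=\log(K/S_0)$.

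\emph{Step 1: reduction to a first-order ODE.} For $\sigma(S)=\sigma S^{\beta}$ one has $S\sigma'(S)/\sigma(S) = \beta = -|\beta|$, so (\ref{ELg}) reduces to the autonomous equation
\begin{equation*}
g''(t) = |\beta|\bigl[\rho^{2} - (g'(t))^{2}\bigr],
\end{equation*}
and the substitution $y:=g'$ yields the separable ODE $y'=|\beta|(\rho^{2}-y^{2})$.

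\emph{Step 2: separation of variables.} I would integrate $dy/(\rho^{2}-y^{2})=|\beta|\,dt$ by partial fractions, noting that two regimes appear which match exactly the trajectory classification of Section~\ref{sec:3.1}. If $|y_{0}|>\rho$ (the $C>0$ case, regions 1 and 2), the factor $\rho^{2}-y^{2}$ keeps a fixed sign, so solving for $y$ gives a ratio of exponentials in $e^{-2|\beta|\rho t}$; substituting into $g(t)=\int_{0}^{t}y(s)\,ds$ and using $g(0)=0$ produces (\ref{gsol1}). If $|y_{0}|<\rho$ (the $C<0$ case, region 3), the antiderivative of the left-hand side is $(1/\rho)\arctanh(y/\rho)$, giving $y(t)=\rho\tanh\!\bigl(|\beta|\rho t+\arctanh(y_{0}/\rho)\bigr)$; integrating via $\int\tanh=\log\cosh$ and using the identity $\cosh(\arctanh z)=1/\sqrt{1-z^{2}}$ to absorb the integration constant coming from $g(0)=0$ yields the stated formula (\ref{gsol2}).

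\emph{Step 3: pinning $y_{0}$.} The remaining parameter $y_{0}=g'(0)$ is fixed by $g(1)=\log(K/S_{0})$. In case (i), setting $t=1$ in (\ref{gsol1}) gives a linear equation in $y_{0}$ whose solution is precisely (\ref{y0sol1}). In case (ii), the cleanest route uses the addition formula $\cosh(A+B)=\cosh A\cosh B+\sinh A\sinh B$ with $A=|\beta|\rho$ and $B=\arctanh(y_{0}/\rho)$, together with $\cosh B=1/\sqrt{1-(y_{0}/\rho)^{2}}$ and $\sinh B=(y_{0}/\rho)/\sqrt{1-(y_{0}/\rho)^{2}}$, to collapse $g(1)=\log(K/S_{0})$ into the compact identity $e^{|\beta|\log(K/S_{0})}=\cosh(|\beta|\rho)+(y_{0}/\rho)\sinh(|\beta|\rho)$, from which (\ref{y0sol2}) follows immediately. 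The only real obstacle is bookkeeping of signs and the correct choice of branch for $\arctanh$; a useful sanity check is that, at the boundary $K=S_{0}e^{\rho}$ (limit of regions 1 and 3), formula (\ref{y0sol2}) gives $y_{0}=\rho$ and, via the $y_{0}\to\rho^{-}$ limit, formula (\ref{gsol2}) collapses to the straight line $g(t)=\rho t$, in agreement with Remark~\ref{rmk:1}.
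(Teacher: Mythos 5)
Your proof follows exactly the same route as the paper's: reduce the Euler--Lagrange equation for the CEV model to the autonomous first-order ODE $y'=|\beta|(\rho^2-y^2)$, integrate by separation of variables (partial fractions when $|y_0|>\rho$, the $\arctanh$ antiderivative when $|y_0|<\rho$), recover $g$ from $\int y$ with $g(0)=0$, and pin $y_0$ from $g(1)=\log(K/S_0)$ via exponentiation in case (i) and the $\cosh$ addition formula in case (ii). The argument is correct; the closing boundary check $y_0\to\rho$ giving $g(t)\to\rho t$ is a nice addition not present in the paper but consistent with Remark~\ref{rmk:1}.
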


\begin{proof}
The Euler-Lagrange equation (\ref{ELCEV}) is a first order ODE for $y(t) := g'(t)$
\begin{equation}\label{yeq}
y'(t) = |\beta | (\rho^2 - y^2(t) )\,,
\end{equation}
with initial condition $y(0) = y_0$. 

i) For this case $|y_0 | \geq \rho$ and we write the equation for $y(t)$ as
\begin{equation}
\frac{dy}{y^2-\rho^2} = - |\beta | dt\,,
\end{equation}
or
\begin{equation}
\frac{1}{2\rho} \Big( \frac{dy}{y-\rho} - \frac{dy}{y+\rho } \Big) = - |\beta | dt \,.
\end{equation}
Integration gives
\begin{equation}
\frac{y(t) - \rho }{y(t) + \rho } = \frac{y_0 - \rho }{y_0 + \rho } e^{-2|\beta | \rho t}\,.
\end{equation}
Solving for $y(t)$ this gives
\begin{equation}\label{ysol1}
y(t) = \rho \frac{y_0 + \rho + (y_0 - \rho ) e^{-2|\beta | \rho t} }
{y_0 + \rho - (y_0 - \rho ) e^{-2|\beta | \rho  t}} \,.
\end{equation}
Integrating the equation for $g(t)$ with initial condition $g(0)=0$ gives
(\ref{gsol1}).
The initial condition $y_0 = g'(0)$ is determined from $g(1) = \log\frac{K}{S_0}$,
which gives (\ref{y0sol1}). 

ii) For this case $|y_0 | < |\rho|$ and the equation (\ref{yeq}) is written as
\begin{equation}
\frac{dy}{\rho^2 - y^2} = |\beta | dt\,,
\end{equation}
which can be integrated as
\begin{equation}
\frac{1}{\rho} \left( \arctanh(y/\rho) - \arctanh(y_0/\rho ) \right) = |\beta | t\,. 
\end{equation}
This gives
\begin{equation}\label{ysol1}
y(t) = \rho \tanh [|\beta | \rho t + \arctanh (y_0/\rho) ]\,.
\end{equation}
Integration of this equation with initial condition $g(0)=0$ gives (\ref{gsol2}).

The initial condition $y_0 = g'(0)$ is determined from $g(1) = \log\frac{K}{S_0}$,
which gives the equation
\begin{equation}
\cosh(|\beta | \rho t + \arctanh(y_0/\rho) ) = \frac{1}{\sqrt{1-y_0^2/\rho^2} } 
e^{|\beta | \log(K/S_0)}\,.
\end{equation}

Expanding the expression on the left hand side gives
\begin{equation}
\cosh(|\beta | \rho ) + \frac{y_0}{\rho} \sinh( |\beta | \rho ) = e^{|\beta | x}\,,
\end{equation}
which yields the result (\ref{y0sol2}).
\end{proof}

Now we are in a position to prove Proposition~\ref{prop:CEV} in the main text. 

\begin{proof}[Proof of Proposition~\ref{prop:CEV}]
The rate function is
evaluated by direct integration from the result
\begin{equation}\label{IsolA}
I(K,S_0) = \frac12 \int_0^1 \frac{(g'(t) - \rho\theta)^2}{\sigma^2(S_0 e^{g(t)} )} dt\,.
\end{equation}

i) For this case we use the solution (\ref{gsol1}) for $g(t)$. 
The factors in the integrand of (\ref{IsolA}) are evaluated as follows.
The denominator is
\begin{equation}
\sigma^2\left(S_0 e^{g(t)} \right) = \sigma^2 S_0^{2\beta} e^{2\beta g(t)} = 
\sigma^2 S_0^{2\beta} \frac{(2\rho)^2}{(y_0 + \rho - (y_0 -\rho) e^{-2\rho |\beta | t} )^2}
e^{-2\rho |\beta | t}\,,
\end{equation}
and the numerator is
\begin{equation}
y(t) - \rho\theta = 
\frac{2\rho (y_0 - \rho\theta)}{y_0 + \rho - (y_0 -\rho) e^{-2\rho |\beta | t}} e^{-\rho |\beta |(1+\theta) t}\,.
\end{equation}

Collecting all factors, we get
\begin{equation}\label{aux1}
I(K,S_0) = \frac{S_0^{2|\beta|}}{2\sigma^2 } (y_0-\rho\theta)^2 
\int_0^1 e^{-2|\beta|\rho \theta t} dt =
\frac{S_0^{2|\beta|}}{\sigma^2 } (y_0 - \rho \theta)^2 \cdot \frac{1 - e^{-2 |\beta | \rho \theta} }{4\theta |\beta | \rho} \,.
\end{equation}
Furthermore, from (\ref{y0sol1}) we have
\begin{equation}
y_0 - \rho \theta = \frac{2\rho}{1 - e^{-2\rho |\beta |} } e^{-|\beta | \rho (1-\theta) } \left( e^{|\beta | x } - 1 \right)\,,
\end{equation}
with $x = \log\frac{K}{S_0} - \rho \theta$ the log-moneyness.

Substituting into (\ref{aux1}) we get
\begin{equation}
I(K,S_0) = \frac{S_0^{2|\beta|}}{\sigma^2 }
\frac{\rho}{\theta |\beta |} \cdot
\frac{1 - e^{-2\theta |\beta | \rho} }{(1-e^{-2\rho |\beta |} )^2} \cdot e^{-2|\beta | \rho (1-\theta)}
\cdot (e^{|\beta | x } - 1)^2\,.
\end{equation}

This reproduces the quoted result (\ref{ICEVsol1}) for $\theta = \pm 1$.

ii) The proof proceeds in a similar way to case (i), starting with the solution 
(\ref{gsol2}) for $g(t)$.

The denominator in (\ref{IsolA}) is evaluated as
\begin{equation}
\sigma^2(S_0 e^{g(t)} ) = \frac{\sigma^2}{S_0^{2|\beta |} }\cdot \frac{1}{1 - y_0^2/\rho^2} \cdot
\frac{1}{\cosh^2 w(t)}\,,
\end{equation}
with $w(t) := |\beta | \rho t + \arctanh (y_0/\rho)$.

Substituting this expression and (\ref{y0sol2}) into (\ref{IsolA}), the integrand becomes 
\begin{eqnarray}
I(K,S_0) &=& \frac12 \int_0^1 \frac{(g'(t)-\rho\theta)^2}{\sigma^2(S_0 e^{g(t)} )} dt \\
&=&
\frac12 \rho^2 \frac{S_0^{2|\beta |} }{\sigma^2} \left(1- \frac{y_0^2}{\rho^2} \right)
\cdot \int_0^1  (\tanh w(t) - \theta)^2 \cosh^2 w(t) dt \nonumber \\
&=& \frac{S_0^{2|\beta |} }{2\sigma^2} \rho^2 
\left( 1- \frac{y_0^2}{\rho^2} \right) \int_0^1 e^{-2 |\beta | \rho\theta t - 2
\arctanh (y_0/\rho) } dt\,. \nonumber
\end{eqnarray}
Performing the $t$ integration reproduces (\ref{ICEVsol2}).

\end{proof}

\bibliographystyle{plain}
\bibliography{ShortMaturityRhoLimit}


\end{document}